\definecolor{SeaGreen}{RGB}{46,139,87}
\definecolor{Navy}{RGB}{0,0,128}
\definecolor{Maroon}{RGB}{128,0,0}
\newtheorem{theorem}{Theorem}
\newtheorem{lemma}{Lemma}[section]
\newtheorem{corollary}{Corollary}[section]
\newtheorem*{acknowledgment*}{Acknowledgment}
\newtheorem{remark}{Remark}[section]
\newcommand{\Eg}{\mathcal E} 
\newcommand{\R}{\mathbb R}
\newcommand{\C}{\mathbb C}
\newcommand{\Z}{\mathbb Z}
\newcommand{\N}{\mathbb N}
\newcommand{\CC}{\mathcal C}
\newcommand{\Union}{\mathop{\bigcup}\limits}
\def\XXint#1#2#3{{\setbox0=\hbox{$#1{#2#3}{\int}$ }
\vcenter{\hbox{$#2#3$ }}\kern-.6\wd0}}
\def\Jg {{\mathcal J}}
\newcommand{\LL}{\mathcal L}
\newcommand{\bs}{\boldsymbol}
\newcommand{\OO}{\mathcal O}
\newcommand{\QQ}{\mathcal Q}
\def\Sg {{\mathcal S}}
\def\Vg {{\mathcal V}}
\def\Tg {{\mathcal T}}
\def\Mg {{\mathcal M}}
\title{Minimization of the discrete interaction energy with smooth potentials}
  \author{ Y. Almog$^*$, Department of
  Mathematics, \\ Ort Braude College, \\ 
    Karmiel 2161002, Israel}
\date{}
\begin{document}
\maketitle

\bibliographystyle{siam}
\begin{abstract}
  We study the pair interaction on flat tori of functions whose
  Fourier coefficients are positive and decay sufficiently rapidly. In
  dimension one we find that the minimizer, up to translation, is the
  equidistant point set. In dimension two, minimizing with respect to
  triplets we find that the minimizer is the triangular lattice.
\end{abstract}
\section{Introduction}
\label{sec:1}

Let $f:\R^d \to \R$ be even and periodic in all coordinates with a
period vector $(l_1,\ldots,l_d)$. Let $\LL={\rm diag}(l_1,\ldots,l_d)$. For
instance we may consider
\begin{subequations}
  \label{eq:1}
  \begin{equation}
  f(x) = f(|x|_p) \,,
\end{equation}
where $x\in\R$ and 
\begin{equation}
|x|_p = \min_{{\bs i}\in\Z^d} |x-\LL \bs i| \,.
\end{equation}
\end{subequations}
Define the Fourier coefficients of $f$, $\{a_{\bs n}\}_{\bs n\in\Z^d}$, by
  \begin{equation}
\label{eq:2}
    a_{\bs n}= \frac{1}{|\QQ|}\int_{\QQ} e^{-i\pi\LL^{-1} {\bs n}\cdot{\bs x}}f({\bs x})\,d{\bs x}\,,
  \end{equation}
  where $\QQ=\Pi_{i=1}^d[-l_i,l_i]$.  Consider then the functional
\begin{equation}
  \label{eq:3}
\Jg(\mu) = \int_{\QQ^2} f(x-y) d\mu_x\, d\mu_y \,,
\end{equation}
where $\mu$ is a positive Borel measure for which
$\mu(\QQ)=1$. We look for the infimum of $\Jg$
\begin{displaymath}
  \Jg_m = \inf_{\mu\in\Mg(\QQ)}\Jg(\mu)\,,
\end{displaymath}
where $\Mg(\QQ)$ denotes the set of all probability
measures on $\QQ$.

For example, in one-dimension, for $f\in L^1(-l,l)$ satisfying
(\ref{eq:1}) we can write
\begin{displaymath}
  \Jg(\mu) = \int_{[-l,l]^2} f(|x-y|_p) d\mu_x\, d\mu_y \,,
\end{displaymath}
where
\begin{displaymath}
  |x|_p = \min (|x|,|2l-|x|)\,.
\end{displaymath}
The minimum is then defined by
\begin{displaymath}
  \Jg_m = \inf_{\mu\in\Mg([-l,l])}\Jg(\mu)\,,
\end{displaymath}
and the Fourier coefficients by
 \begin{displaymath}
    a_n= \int_{[-l,l]} e^{-in\pi x/l}f(|x)\,dx\,.
 \end{displaymath}

We now make the following claim whose proof is given also in
\cite{bu14} for a more general setting
\begin{lemma}
  Suppose that the Fourier coefficients of $f(x)$, defined by
  \eqref{eq:2}, are all non-negative.  Let $\mu_L$ denote the Lebesgue
  measure on $\QQ$ normalized by $|\QQ|$. Then,
  \begin{displaymath}
    \Jg_m = \Jg(\mu_L)\,.
  \end{displaymath}
Furthermore, for any $\mu\in\Mg(\QQ)$ we have
\begin{equation}
  \label{eq:4}
\Jg(\mu)-\Jg(\mu_L) = |\QQ|^2\sum_{\bs n\in\Z^d\setminus\{\bs 0\} }
  a_{\bs n}|b_{\bs n}|^2\,,
\end{equation}
where $\{b_{\bs n}\}_{\bs n\in\Z^d}$ are the Fourier
coefficients associated with $\mu$.
\end{lemma}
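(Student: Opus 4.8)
The plan is to diagonalize $\Jg$ in the Fourier basis and reduce the whole statement to the orthogonality of the characters $e^{i\pi\LL^{-1}\bs n\cdot x}$, which have $\QQ$ as a fundamental domain. Since $f$ is assumed to have rapidly decaying Fourier coefficients, I would begin by writing the absolutely convergent expansion $f(x)=\sum_{\bs n\in\Z^d}a_{\bs n}e^{i\pi\LL^{-1}\bs n\cdot x}$, valid for every $x\in\R^d$ by periodicity; the decay hypothesis ensures $\sum_{\bs n}|a_{\bs n}|<\infty$ and hence uniform convergence. Alongside this I would introduce the Fourier coefficients of $\mu$, normalized exactly as in \eqref{eq:2}, namely $b_{\bs n}=\frac{1}{|\QQ|}\int_\QQ e^{-i\pi\LL^{-1}\bs n\cdot x}\,d\mu_x$, and record two elementary facts: $b_{\bs 0}=\mu(\QQ)/|\QQ|=1/|\QQ|$ for \emph{every} probability measure $\mu$, and $\frac{1}{|\QQ|}\int_\QQ e^{i\pi\LL^{-1}\bs n\cdot x}\,d\mu_x=\overline{b_{\bs n}}$ because $\mu$ is real.

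The main computation is to insert the expansion of $f$ into \eqref{eq:3}. Writing $f(x-y)=\sum_{\bs n}a_{\bs n}e^{i\pi\LL^{-1}\bs n\cdot x}e^{-i\pi\LL^{-1}\bs n\cdot y}$ and interchanging the sum with the double integral against $d\mu_x\,d\mu_y$, the integral factors over $x$ and $y$, yielding
\[
\Jg(\mu)=\sum_{\bs n\in\Z^d}a_{\bs n}\Big(\int_\QQ e^{i\pi\LL^{-1}\bs n\cdot x}\,d\mu_x\Big)\Big(\int_\QQ e^{-i\pi\LL^{-1}\bs n\cdot y}\,d\mu_y\Big)=|\QQ|^2\sum_{\bs n\in\Z^d}a_{\bs n}|b_{\bs n}|^2.
\]
Applying this identity to $\mu_L$, for which orthogonality of the characters gives $b_{\bs n}=0$ when $\bs n\neq\bs 0$ and $b_{\bs 0}=1/|\QQ|$, produces $\Jg(\mu_L)=a_{\bs 0}$. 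Since the $\bs n=\bs 0$ term contributes $|\QQ|^2a_{\bs 0}|b_{\bs 0}|^2=a_{\bs 0}$ for every $\mu$, subtracting cancels this term and leaves exactly \eqref{eq:4}. Finally, the hypothesis $a_{\bs n}\ge 0$ forces each summand $a_{\bs n}|b_{\bs n}|^2$ to be non-negative, whence $\Jg(\mu)\ge\Jg(\mu_L)$ for all $\mu\in\Mg(\QQ)$, i.e. $\Jg_m=\Jg(\mu_L)$.

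The one step demanding genuine care is the interchange of summation and integration, which I expect to be the main obstacle; it is resolved by the decay hypothesis. Because $\sum_{\bs n}|a_{\bs n}|<\infty$, $\mu\times\mu$ is a finite measure on $\QQ^2$, and $|e^{i\pi\LL^{-1}\bs n\cdot(x-y)}|\le 1$, the partial sums are dominated by the summable sequence $|a_{\bs n}|$, so dominated convergence (equivalently, Fubini applied to the uniformly convergent series) legitimizes the exchange. A secondary point worth verifying is that $f(x-y)$ is well defined and faithfully represented by its Fourier series even when $x-y\notin\QQ$, which follows from the common periodicity of $f$ and of each character.
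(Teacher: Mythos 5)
Your proposal is correct and takes essentially the same approach as the paper: both diagonalize the quadratic form $\Jg$ in the character basis $e^{i\pi\LL^{-1}\bs n\cdot\bs x}$ and reduce \eqref{eq:4} to orthogonality plus the fact that $b_{\bs 0}$ is the same for every probability measure. The only difference is organizational --- the paper first passes to $\nu=\mu-\mu_L$, killing the cross term and the $\bs n=\bs 0$ mode before expanding, while you expand $\Jg(\mu)$ directly and cancel the universal $\bs n=\bs 0$ contribution on subtraction; if anything, your justification of the sum--integral interchange is more careful than the paper's formal Fourier expansion of the (possibly singular) measure $d\nu_x$.
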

\begin{proof}
Let $\nu=\mu-\mu_L$. We first write
\begin{displaymath}
  \Jg(\mu)-\Jg(\mu_L) = 2\int_{\QQ^2} f(x-y) d\bs x \, d\nu_y +
  \int_{\QQ^2} f(x-y) d\nu_x\, d\nu_y \,.
\end{displaymath}
Since
\begin{displaymath}
  \int_{\QQ} f(x-y) d\bs x
\end{displaymath}
is independent of $y$, and since $\nu(\QQ)=0$, we obtain that 
\begin{equation}
  \label{eq:5}
  \Jg(\mu)-\Jg(\mu_L) = \int_{\QQ^2} f(x-y) d\nu_x\, d\nu_y \,.
\end{equation}

We next note that
  \begin{displaymath}
    f(x-y)= \sum_{\bs n\in\Z^d} a_{\bs n}e^{i\pi\LL^{-1}\bs n\cdot(\bs{x-y})} \,,
  \end{displaymath}
and
\begin{displaymath}
   d\nu_x = \sum_{\bs n\in\Z^d\setminus\{\bs 0\} } b_{\bs n}e^{i\pi\LL^{-1}\bs n\cdot \bs x} \,dx\,.
\end{displaymath}
Substituting the above into \eqref{eq:5} immediately yields
\eqref{eq:4}. 
\end{proof}

Consider then the case where $\QQ=[0,1]^N$ and
\begin{equation}
\label{eq:6}
  \mu=\frac{1}{N}\sum_{k=0}^{N-1} \delta_{\bs x_k} \,,
\end{equation}
in which $\bs x_k\in\QQ$. In this case we have for any ${\bs n}\in\Z^d$
\begin{equation}
\label{eq:7}
  b_{\bs n} = \frac{1}{N}\sum_{k=1}^N \exp
  \{i2\pi \bs n\cdot {\bs x}_k\} \,.
\end{equation}
We attempt to minimize $\Jg$ over the set $\Mg_N$ (which is isomorphic
to $[0,1)^N$) of measures given by \eqref{eq:6} for a given value of
$N$, or equivalently, over all sequences $\{x_k\}_{k=0}^{N-1}\subset\QQ$.
We can write $\Jg(\mu)$ in view of \eqref{eq:3} as
\begin{equation}
\label{eq:8}
  \Jg(\mu) = \frac{1}{N^2}\sum_{m,k =0}^{N-1} f(\bs x_m-\bs x_k) =
  \frac{(N-1)}{N}E_f + \frac{f(\bs 0)}{N} \,,
\end{equation}
where
\begin{equation}
\label{eq:9}
  E_f = \frac{1}{N(N-1)}\sum_{
    \begin{subarray}
      m,k =0 \\
      m\neq k
    \end{subarray}}^{N-1} f(\bs x_m-\bs x_k) \,.
\end{equation}
Consequently, the minimizer of $\Jg$ over measures defined by
\eqref{eq:6} is also the minimizer of $E$.

Let
\begin{displaymath}
 \Z_K = \Z  \mod K = \{0,1,\ldots,K-1\}\,,
\end{displaymath}
and ${\bs n}=(n_1,\ldots,n_d)\in \Z_K^d$. 
We proceed with the most straightforward upper bound, which is
obtained by using the
square  lattice in $\QQ$
  \begin{displaymath}
    \bs x_{\bs n} = \frac{1}{K}\bs n 
  \end{displaymath}

It can be easily verified that for every $\bs m\in\Z^d$
\begin{displaymath}
  \bar{b}_{\bs m} = \frac{(-1)^{\sum_{i=1}^dm_i}}{K^d}\sum_{\bs n\in\Z_K^d} \exp
  \Big\{i\frac{2\pi}{K} \bs m\cdot \bs n\Big\} \,.
\end{displaymath}
From which we obtain
\begin{displaymath}
  \bar{b}_{\bs m} =
  \begin{cases}
    1 & \frac{1}{K}\bs m\in\Z^d \\
    0 & \text{otherwise}\,.
  \end{cases}
\end{displaymath}
 By (\ref{eq:4}) we now easily 
\begin{equation}
\label{eq:10}
  \Jg_m -\Jg(\mu_L) \leq \sum_{\bs n\in\Z^d\setminus(0,0)} a_{K\bs n} \,.
\end{equation}
We focus in this work on the case where $a_{\bs n}$ is positive and
decays sufficiently fast for $|{\bs n}|\to\infty$. Using \eqref{eq:10} we may conclude in
such a case that
\begin{displaymath}
  a_{\bs n}|b_{\bf n}|^2\leq Jg_m -\Jg(\mu_L) \,,
\end{displaymath}
and hence that
\begin{equation}
\label{eq:11}
  |b_{\bf m}|^2\leq \sum_{\bs n\in\Z^d\setminus(0,0)} \frac{a_{K\bs n}}{ a_{\bs m}} \,.
\end{equation}
As $K\to\infty$, for rapidly decaying $\{a_n\}_{n=1}^\infty$, we may therefore
obtain a lot of information on the minimizing measure whose Fourier
coefficients $\{b_n\}_{n=1}^\infty$must also rapidly vanish in that limit.

The above sketch of ideas lies in the basis of the proofs of the
following pair of results. The first of them obtains convergence in
one dimension to an equidistant configuration
\begin{theorem}
\label{thm:1d}
  Let $f$ denote a 1-periodic function whose Fourier coefficients are
  positive and satisfy 
\begin{displaymath}
\lim_{N\to\infty}N^2 \frac{\Jg_m -\Jg(\mu_L) }{a_{N/2}} =0 \,,
  \end{displaymath} 
and that there exists $C_0>0$ such that for all $k\geq1$
\begin{equation}
\label{eq:12}
\max_{1\leq m\leq k} \frac{1}{m^2a_m} \leq \frac{C_0}{k^2a_k} \,.
\end{equation}
Let $\Sg_N$
  denote a minimizing set of \eqref{eq:9}, and let $\Tg_N$ denote the
  equidistant measure
 \begin{equation}
\label{eq:13}
    \Tg_N = \Big\{\frac{(2k-N-1)}{2N}\Big\}_{k=1}^N \,.
  \end{equation}
Then, up to arbitrary translation, $\Sg_N=\Tg_N$.
\end{theorem}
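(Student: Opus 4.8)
The plan is to exploit the spectral identity \eqref{eq:4}, which reduces minimizing $\Jg$ over $N$-point configurations to minimizing $\sum_{m\neq0}a_m|b_m|^2$, where $b_m=\frac1N\sum_k e^{2\pi i m x_k}$. For the equidistant set \eqref{eq:13} one has $|b_m|=1$ when $N\mid m$ and $b_m=0$ otherwise, so $\Jg(\Tg_N)-\Jg(\mu_L)=\sum_{j\neq0}a_{jN}$, which by \eqref{eq:10} coincides with the lattice upper bound. Thus the theorem amounts to showing that this bound is attained only at $\Tg_N$, i.e.\ that the minimizer $\Sg_N$ has $b_m=0$ for every $m$ not divisible by $N$.

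First I would record a rigidity reduction: it suffices to prove $b_1=\cdots=b_{N-1}=0$. Indeed, writing $z_k=e^{2\pi i x_k}$, the numbers $Nb_m$ are the power sums $p_m=\sum_k z_k^m$; if $p_1=\cdots=p_{N-1}=0$ then Newton's identities force the elementary symmetric functions $e_1,\dots,e_{N-1}$ to vanish, so the $z_k$ are exactly the roots of $z^N=c$ with $c=(-1)^{N-1}\prod_k z_k$, $|c|=1$. These roots are distinct and equally spaced, whence $\{x_k\}$ is a translate of $\Tg_N$; this simultaneously yields uniqueness and the fact that minimizers have distinct points.

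The analytic core is to show the minimizer's low coefficients vanish. From the pointwise estimate \eqref{eq:11} together with \eqref{eq:12} (which makes $m^2a_m$ almost monotone, so $m^2 a_m\ge N^2 a_N/C_0$ for $m\le N$), one obtains that $|b_m|$ is uniformly small on $1\le m\le N-1$, the decay hypothesis $N^2(\Jg_m-\Jg(\mu_L))/a_{N/2}\to0$ quantifying this and in particular forcing $a_{jN}\ll a_{N/2}$. I would then compute the second variation of $\Jg$ at $\Tg_N$; writing $x_k=k/N+u_k$ with $\sum_k u_k=0$ it takes the diagonal form $8\pi^2\sum_{r=1}^{N-1}\Phi(r)\,|\widehat u(r)|^2$, where $\widehat u$ is the discrete Fourier transform of $(u_k)$ and
\[
  \Phi(r)=\sum_{q\in\Z}(qN+r)^2a_{qN+r}-2N^2\sum_{j\ge1}j^2a_{jN}.
\]
The dominant positive term $\min(r,N-r)^2 a_{\min(r,N-r)}\ge N^2a_N/C_0$ supplied by \eqref{eq:12} beats $2N^2\sum_{j\ge1}j^2a_{jN}\sim2N^2a_N$ once the limit hypothesis makes $a_N/a_{N/2}$ small enough, so $\Phi(r)>0$ for every residue $r$ and large $N$; hence $\Tg_N$ is a strict local minimizer. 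I would then combine the a priori smallness of $(b_m)_{1\le m\le N-1}$ with a uniform positivity of the Hessian on the resulting neighborhood to conclude that $\Sg_N$ lies in the convexity basin of $\Tg_N$ and therefore equals it.

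The main obstacle is precisely this last upgrade from ``approximately equidistant'' to ``exactly $\Tg_N$.'' The bound \eqref{eq:11} controls $b_m$ strongly for small $m$ but only weakly for $m$ near $N$ (there $a_m\sim a_N$ and the estimate degenerates), so localizing the minimizer in a neighborhood small enough to lie in the convexity basin is delicate and forces one to use \eqref{eq:12} and the limit hypothesis in tandem: the former transfers spectral control across residue classes, while the latter guarantees that the only frequencies that could push the energy below $\Jg(\Tg_N)$---those with $N\mid m$---are too small to offset the cost of any nonzero $b_m$ with $m\not\equiv0$. Carrying this balance through quantitatively, and not merely to leading quadratic order, is the crux of the argument.
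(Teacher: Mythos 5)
Your skeleton---the spectral identity, a priori bounds on the $b_m$, rigidity through Newton's identities, and positivity of the second variation at $\Tg_N$---matches the paper's strategy, and your $\Phi(r)$ is precisely the linearization of the paper's final energy comparison. But the two steps your plan treats as routine are the actual substance of the proof, and both are genuinely missing. First, the assertion that \eqref{eq:11} and \eqref{eq:12} make $|b_m|$ uniformly small for \emph{all} $1\le m\le N-1$ is false: \eqref{eq:12} bounds $a_m$ from below only in terms of coefficients of larger index, so for $m$ near $N$ the best available bound is $|b_m|^2\lesssim \epsilon_N/a_{N-1}$, and for admissible sequences such as $a_m=e^{-m}$ (which satisfies both hypotheses) this ratio is $O(1)$. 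You concede this degeneration in your last paragraph, but your plan cannot survive it: the rigidity reduction needs all of $b_1,\dots,b_{N-1}$, and any localization of $\Sg_N$ near a translate of $\Tg_N$ needs the top-half frequencies as well. The paper's remedy is not a better bound on $b_m$: it normalizes $\prod_k z_k=1$, runs an induction on Newton's identities (invoking \eqref{eq:12} at every step so the constants do not compound) to prove the coefficient bound \eqref{eq:23} for $k\le N/2$, obtains the range $k>N/2$ for free from the symmetry $e_k=\bar e_{N-k}$, and only then converts coefficient control into configuration control via Rouch\'e's theorem applied to $P_N(t)$ against $t^N-1$, which yields \eqref{eq:19}. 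Nothing in your proposal plays this role; smallness of Fourier coefficients does not by itself produce the configuration-space neighborhood in which you intend to invoke convexity.

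Second, the ``uniform positivity of the Hessian on the resulting neighborhood'' is asserted rather than proved, and a generic convexity-basin argument fails quantitatively. The spectral gap of $D^2\Jg$ at $\Tg_N$ in the non-translation directions is of order $Na_{N/2}$, which is exponentially small for the fast-decaying coefficients the theorem concerns, whereas the Hessian's modulus of continuity is governed by $\|f'''\|_\infty\lesssim\sum_m m^3a_m$ and is only polynomially small in $1/N$; the guaranteed basin therefore has radius $O\big(N^2a_{N/2}/\sum_m m^3a_m\big)$, which for $a_m=e^{-m}$ is of order $N^2e^{-N/2}$, far smaller than the best localization radius $\sim[\epsilon_N/(Na_{N/2})]^{1/2}\sim e^{-N/4}N^{-1/2}$ obtainable from \eqref{eq:19}. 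The paper never compares Hessians in operator norm; it splits the energy difference into a gain and a loss. The gain $\sum_{n\le \lfloor N/2\rfloor}a_n|b_n|^2$ is bounded below by $CNa_{N/2}\|\bs\delta_N\|_2^2$ as in \eqref{eq:35}, using the linearization $b_n\approx{\bs V}_n\cdot\bs\delta_N$ whose error needs only the polynomial smallness $\|\bs\delta_N\|_\infty\ll N^{-3/2}$ supplied by \eqref{eq:19} together with \eqref{eq:28}; the loss $\sum_n a_{nN}(1-|b_{nN}|^2)$ is bounded above by $CN\|\bs\delta_N\|_2^2\sum_n n^2a_{nN}$ via entrywise Hessian bounds at an intermediate point, combined with the iterated-halving estimate $\sum_n n^2a_{nN}\lesssim a_{N/2}/N$. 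Your condition $\Phi(r)>0$ is the infinitesimal shadow of this comparison; carrying it out at finite $\bs\delta_N$---which you yourself identify as the crux---is exactly what is absent from the proposal.
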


In dimension 2 the expected minimizer is the triangular lattice,
i.e., for $N=2L^2$ in the rectangle $R=[0,\sqrt{3}]\times[0,1]$ it
represented by $\Union_{k=1}^L\Union_{j=1}^{2L}( x_{kj}, y_{kj})$
  \begin{equation}
\label{eq:14}
    (x_{kj},y_{kj})=
    \begin{cases}
      \Big(\sqrt{3}\frac{k-1}{L},\frac{j-1}{2L}\Big) & j \text{ odd }\\
   \Big({\sqrt{3}\frac{k-\frac{1}{2}}L},\frac{j-1}{2L}\Big) &  j \text{ even }
    \end{cases}
  \end{equation}
Unfortunately, since estimates similar to \eqref{eq:11} cannot provide
bounds for enough Fourier coefficients, we can only consider minimization with
respect to triplets, i.e., we assume that $\{(x_n,y_n)\}_{n=1}^{2L^2}$
satisfies
\begin{equation}
\label{eq:15}
\begin{cases}
   (x_{n+2L^2/3},y_{n+2L^2/3})= \Big(x_n+\frac{\sqrt{3}}{2L},y_n+\frac{1}{2L}\Big) &
   n\in\Big[1,\frac{2L^2}{3}\Big]\cap\Z \\
  (x_{n+4L^2/3},y_{n+4L^2/3})= \Big(x_n+\frac{\sqrt{3}}{L},y_n\Big)
\end{cases}
\end{equation}
The triangular lattice \eqref{eq:14} is given in that case by $\Tg_L=
\Union_{k=1}^{2L/3}\Union_{j=1}^L( x_{kj}, y_{kj})$ where
  \begin{displaymath}
  x_{kj}=\frac{3(2k-1)}{4L}\quad ; \quad y_{kj}=\frac{4j+[1+(-1)^k]}{4L}\,,
\end{displaymath}
where each $( x_{kj}, y_{kj})$ is accompanied by additional pair of
points in the lattice according to \eqref{eq:15}. 

We can now state the main result in the two-dimensional case
\begin{theorem}
\label{thm:2d}
  Let $f$ denote a 2-periodic function whose Fourier coefficients
  $\{a_{nm}\}_{(n,m)\in\Z^2}$ are
  positive. Let further 
\begin{displaymath}
  \tilde{\epsilon}_L = \frac{L^2}{2}\sum_{(p,q)\in\Z^2\setminus\{(0,0)\}} (p^2+q^2)a_{pL,qL}[1+(-1)^{p+q}] \,.
\end{displaymath}
 and suppose that
 \begin{equation}
\label{eq:16}
\lim_{L\to\infty} e^{\lambda L}\frac{\tilde{\epsilon}_L}{a_{2L/3,L}} =0 \,,
  \end{equation}
and that there exists $C_0>0$ and $L_0>0$ such that for all
$L>L_0$
\begin{equation}
\label{eq:17}
\max_{
  \begin{subarray}{c}
    |m|\leq \frac{2L}{3} \\
    |n|\leq |L|
  \end{subarray}}
a_{mn}\geq C_0a_{2L/3,L}\,.
\end{equation}
  
  Let $\Sg_L$ denote a minimizing set of \eqref{eq:9} satisfying
  \eqref{eq:15}, and let $\Tg_L$ be given by \eqref{eq:63}. Then,
  for sufficiently large $L$, it holds that $\Sg_L=\Tg_L$.
\end{theorem}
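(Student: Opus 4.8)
The plan is to mirror the one-dimensional strategy of Theorem~\ref{thm:1d}, but now working with the reduced configuration obtained after quotienting by the triplet constraint \eqref{eq:15}. First I would exploit \eqref{eq:15} to reduce the full $2L^2$-point problem to a configuration of $2L^2/3$ points, writing the Fourier coefficients $b_{\bs m}$ of the full measure as a product of the Fourier coefficients of the reduced configuration times a structure factor coming from the three translates $(0,0)$, $(\tfrac{\sqrt3}{2L},\tfrac{1}{2L})$, $(\tfrac{\sqrt3}{L},0)$. This structure factor is $1+e^{i\pi(m_1/(\sqrt3 L)\cdot\sqrt3+m_2/L)}+\dots$, and its modulus is bounded below away from zero except on a sublattice of frequencies; the combination $[1+(-1)^{p+q}]$ appearing in $\tilde\epsilon_L$ is precisely the trace of this factor on the relevant frequencies, which is why \eqref{eq:16} is stated with that weight.

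Next I would reproduce the chain of estimates \eqref{eq:10}--\eqref{eq:11}: using the square-lattice upper bound and positivity of the $a_{\bs n}$, every individual term $a_{\bs m}|b_{\bs m}|^2$ is controlled by $\Jg_m-\Jg(\mu_L)$, hence by $\tilde\epsilon_L$. Dividing by $a_{\bs m}$ and invoking the lower bound \eqref{eq:17} on the coefficients over the box $|m|\le 2L/3$, $|n|\le L$ gives $|b_{\bs m}|^2\le C\,\tilde\epsilon_L/a_{2L/3,L}$ for all frequencies $\bs m$ in that box. The force of hypothesis \eqref{eq:16} is that this bound decays faster than $e^{-\lambda L}$, so that the first $O(L^2)$ Fourier coefficients of the reduced configuration are exponentially close to those of the triangular lattice $\Tg_L$ given by \eqref{eq:63}. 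The remaining step is a rigidity argument: having pinned down enough low-frequency coefficients, I would show that the only point configuration (up to the translation freedom) whose leading Fourier data match $\Tg_L$ to this precision is $\Tg_L$ itself. Here the exponential smallness is what promotes ``approximately equal coefficients'' to ``exactly equal configuration,'' because the map from configurations to the truncated Fourier vector is real-analytic with a quantitative local inverse, and the number of matched coefficients grows like $L^2$ while the gap to any competing lattice configuration is only polynomially small in $L$.

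I expect the main obstacle to be exactly this last rigidity/uniqueness step, which is genuinely harder than its one-dimensional counterpart. In one dimension the matched Fourier coefficients directly force the gaps between consecutive points to be equal, because ordering the points on a circle and controlling $b_m$ for $1\le m\le N/2$ determines the configuration via an elementary extremal argument (this is where \eqref{eq:12} enters). In two dimensions there is no canonical ordering, the admissible configurations under \eqref{eq:15} form a higher-dimensional manifold, and one must rule out all lattice-like and non-lattice competitors simultaneously. The technical heart will be to convert the exponential-in-$L$ control on finitely many $b_{\bs m}$ into an exact combinatorial statement about the positions, presumably by showing that any deviation from $\Tg_L$ produces a detectable change in one of the controlled low-frequency coefficients of size at least polynomial in $1/L$, contradicting the exponential smallness from \eqref{eq:16}. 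The quantitative matching between the exponential decay in \eqref{eq:16} and the polynomial separation of competing configurations is the delicate point on which the whole argument turns.
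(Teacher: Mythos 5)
Your first half does track the paper's actual route: the triplet constraint \eqref{eq:15} reduces the problem to $2L^2/3$ points whose full Fourier coefficients factor as $b_{mn}=\big[1+e(m/L)+e((m+n)/(2L))\big]\sum_k z_k^m w_k^n$, and combining the lattice upper bound with \eqref{eq:16}--\eqref{eq:17} gives exponential smallness of the reduced coefficients on a box of $O(L^2)$ frequencies, which is exactly \eqref{eq:48}. (Two small corrections: the relevant upper bound is the \emph{triangular}-lattice bound of Lemma~\ref{lem:lattice}, not the square-lattice bound \eqref{eq:10}, which is where the weight $[1+(-1)^{p+q}]$ really comes from; and the factor $(p^2+q^2)$ in $\tilde{\epsilon}_L$ plays no role at this stage.)

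The genuine gap is in the step you yourself flag as the delicate one, and the mechanism you propose for it cannot work. You hope to conclude $\Sg_L=\Tg_L$ \emph{exactly} by arguing that any deviation from $\Tg_L$ perturbs some controlled low-frequency coefficient by an amount polynomial in $1/L$, contradicting exponential smallness. This is false for continuous deviations: a perturbation of size $\eta$ changes each $b_{mn}$ only by $O(L\eta)$, so every configuration within, say, $e^{-2\lambda L}$ of $\Tg_L$ satisfies all of your Fourier constraints, and no finite collection of \emph{approximate} Fourier equations can ever force exact equality of positions. Pure Fourier rigidity yields only an approximation statement, and that is all the paper extracts from it: via a near-orthonormality argument for the vectors $v_{n,m}$ of \eqref{eq:50} together with an Erd\H{o}s--Tur\'an--Koksma-type separation bound (Lemma~\ref{lem:noduplicity}, which prevents two points from collapsing onto one lattice site), one gets \eqref{eq:60}, i.e.\ each point of $\Sg_L$ lies within $e^{-\lambda L}$ of a \emph{distinct} site of $\Tg_L$. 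Exact equality then requires invoking the minimality of $\Sg_L$ a second time, through a two-sided energy expansion around $\Tg_L$ as in \eqref{eq:72}: the gain term is bounded below by the coercivity estimate \eqref{eq:66}, $\sum_{m,n}|b_{mn}|^2\geq CL^2\|{\bs \delta}_L\|_2^2$ (whose proof needs \eqref{eq:60} to control the quadratic Taylor remainder), hence by $CL^2a_{2L/3,L}\|{\bs \delta}_L\|_2^2$ using \eqref{eq:17}; the loss term at the frequencies $(mL,nL)$ is bounded above by a Hessian estimate by $CL^4e^{-\lambda L}a_{2L/3,L}\|{\bs \delta}_L\|_2^2$ --- this is precisely where the weight $(p^2+q^2)$ in $\tilde{\epsilon}_L$ and hypothesis \eqref{eq:16} enter, a role your outline misses. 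Since $\Jg(\Sg_L)\leq\Jg(\Tg_L)$, positivity of the difference forces ${\bs \delta}_L=0$. Without this second variational step (or some substitute that again uses minimality rather than only the Fourier bounds), your argument cannot close.
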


The minimization problem \eqref{eq:9} has been frequently addressed in
the literature. In \cite{go03} it has been established that for
$1-$periodic $f:\R\to\R$, which is convex and monotone decreasing in
$[0,1]$, the unique minimizer (up to translation) is the equidistant
sequence \eqref{eq:13}. Note that the Fourier expansion of any real, even,
strictly convex function in $[-1/2,1/2]$ consists of real positive
coefficients only. 

In dimension 2 consider the functional
\begin{displaymath}
\Eg(\CC):= \liminf_{R \to \infty} \frac{1}{|\CC \cap B_R|} \sum_{x, y \in \CC \cap B_R, x\neq y}
f(|x-y|)
\end{displaymath}
 where $\CC$ is a non-empty discrete point configuration of $\R^2$ of
 uniform density and $f:\R_+\to \R$ is  a completely monotone function
 of the  squared distance, i.e. when $f(r)= g(r^2)$
with $g$ a smooth completely monotone function on $\R_+$
i.e. satisfying $(-1)^k g^{(k)}(r) \geq 0$  for all $r \geq 0$ for every
integer $k\geq 0$. In particular the Gaussian $f(r)=e^{-tr^2}$  is  a completely monotone function
 of the  squared distance. By the Cohn-Kumar conjecture the minimizer
 of $\Eg$ is the triangular lattice (see \eqref{eq:14}). 

In \cite{pese20} it is demonstrated that
if the Cohn-Kumar conjecture is true in $\R^2$ then the triangular
lattice is the global minimizer also in the case where $f(r)=-\log
r$. The method used in \cite{pese20} include integration of the periodic heat
kernel with respect to time to obtain the kernel of the
periodic inverse Laplacian. We shall refer to \cite{pese20} again in
\S~\ref{sec:3}. 

Beyond the above there is a long list of works that addresses
\eqref{eq:9} in various settings. To name just a few we mention
\cite{saff2013logarithmic,hasa04,beetal21,coetal22}. The problem is
related to the well-known Abrikosov problem in superconductivity via
the Mellin transform \cite{ab57,mo88}.   

In the next section we consider the one-dimensional problem and prove
Theorem \ref{thm:1d}. In Section 3 we consider the two-dimensional
problem and prove Theorem \ref{thm:2d}.

\section{1D Minimization}
\label{sec:2}

Let
\begin{displaymath}
  \epsilon_N = \sum_{n\in\Z^\setminus\{0\}} a_{nN} \,.
\end{displaymath}
Let $\{x_k\}_{k=1}^N$ denote the minimizer of (\ref{eq:8}) for
  some $f\in L^2(-1/2,1/2)$. Without loss of generality we may assume that
  $x_1\leq x_2\leq\ldots\leq x_N$. We now show
\begin{lemma}
Suppose that 
\begin{equation}
    \label{eq:18}
  \lim_{N\to\infty}N \frac{Jg_m -\Jg(\mu_L) }{a_{N/2}} \leq \lim_{N\to\infty}N \frac{\epsilon_N}{a_{N/2}}=0 \,,
\end{equation}
and that there exists $C_0>0$ such that for all $k\geq1$ \eqref{eq:12} is
satisfied. Then, there exists $C>0$ such that for sufficiently large $N$,
  \begin{equation}
    \label{eq:19}
\min_{\alpha\in[0,1]} \max_{1\leq k\leq N}|x_k-(-1/2+2k/N+\alpha)|\leq C\frac{[\epsilon_N]^{1/2}}{[Na_{N/2}]^{1/2}}\,.
  \end{equation}
\end{lemma}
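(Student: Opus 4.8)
The starting point is the first Lemma of the introduction applied to the empirical measure $\mu$ of the minimizer $\{x_k\}$. Since all $a_n>0$, \eqref{eq:4} gives $\Jg(\mu)-\Jg(\mu_L)=\sum_{n\neq 0}a_n|b_n|^2$, and comparing $\mu$ with the equidistant (square-lattice) competitor as in \eqref{eq:10} yields
\[
\sum_{n\neq 0}a_n|b_n|^2=\Jg(\mu)-\Jg(\mu_L)\le\epsilon_N .
\]
The one inequality I will really use is its weighted consequence: for every $M\le N/2$,
\[
\sum_{0<|n|\le M}\frac{|b_n|^2}{n^2}\le\Big(\max_{1\le n\le M}\frac{1}{n^2a_n}\Big)\sum_{n\neq 0}a_n|b_n|^2\le\frac{C_0}{M^2a_M}\,\epsilon_N,
\]
the last step being \eqref{eq:12}. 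It is essential to keep the total energy intact under a single maximum rather than summing the pointwise bound $|b_n|^2\le\epsilon_N/a_n$ term by term: the latter loses a factor $\sim N$, and it is precisely this gain that accounts for the extra power of $N$ in the denominator of \eqref{eq:19}.

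Write $g_k$ for the points of $\Tg_N$ in increasing order and, after subtracting the optimal translation $\alpha$, set $d_k=x_k-g_k$. The plan is to prove the $\ell^2$ bound $\sum_k d_k^2\lesssim \epsilon_N/(Na_{N/2})$ and then read off \eqref{eq:19} from the trivial $\max_k|d_k|\le(\sum_k d_k^2)^{1/2}$; because the $\ell^2$ bound is already of the right order, no Sobolev-type upgrade is needed. For the $\ell^2$ bound I would first establish a crude equidistribution statement: the Erd\H{o}s--Tur\'an inequality, combined with Cauchy--Schwarz and the weighted bound above, gives $D_N\lesssim N^{-1}+\big(\epsilon_N/(Na_{N/2})\big)^{1/2}=O(1/N)$ under \eqref{eq:18}, so every $x_k$ lies within $O(1/N)$ of its grid point. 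This fixes the correct order matching $x_k\leftrightarrow g_k$, pins down $\alpha$, and makes $|2\pi n d_k|$ bounded for $|n|\le N/2$.

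Using $\sum_k e^{i2\pi n g_k}=0$ for $0<|n|<N$ I expand
\[
b_n=\tfrac1N\sum_k e^{i2\pi n g_k}\big(e^{i2\pi n d_k}-1\big)=i2\pi n\,\hat d_n+R_n,
\]
where $\hat d_n=\tfrac1N\sum_k e^{i2\pi n g_k}d_k$ is the discrete Fourier transform of $(d_k)$ over the grid and $R_n$ collects the quadratic and higher terms. Discrete Parseval gives $\tfrac1N\sum_k d_k^2=\sum_{|n|\le N/2}|\hat d_n|^2$, and inserting $\hat d_n=(b_n-R_n)/(i2\pi n)$ together with the weighted bound (taken at $M=N/2$) produces
\[
\frac1N\sum_k d_k^2\le \frac{1}{2\pi^2}\sum_{0<|n|\le N/2}\frac{|b_n|^2}{n^2}+\frac{1}{2\pi^2}\sum_{0<|n|\le N/2}\frac{|R_n|^2}{n^2}\le \frac{2C_0\,\epsilon_N}{\pi^2N^2a_{N/2}}+(\text{error}).
\]
Discarding the error for a moment, this is exactly $\sum_k d_k^2\lesssim \epsilon_N/(Na_{N/2})$, and \eqref{eq:19} follows.

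The genuine difficulty is the remainder $R_n$. The naive bound $|R_n|\le 2\pi^2 n^2\,\overline{d^2}$, with $\overline{d^2}=\tfrac1N\sum_k d_k^2$, makes the error term of size $\sim \overline{d^2}^{\,2}\sum_{|n|\le N/2}n^2\sim N^3\,\overline{d^2}^{\,2}$, which the crude estimate $\overline{d^2}=O(1/N^2)$ is not small enough to absorb. I expect to close this by a bootstrap in which \eqref{eq:18} plays the decisive role: since $N^3\cdot\epsilon_N/(N^2a_{N/2})=N\epsilon_N/a_{N/2}\to0$ is exactly \eqref{eq:18}, once $\overline{d^2}$ is of the target size the quadratic error is $o(\overline{d^2})$ and the estimate becomes self-improving. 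The real work is to organize the iteration so that the initial $O(1/N)$ deviation bound is upgraded to $o(1/N)$ before the linearization is invoked --- for instance by controlling $\overline{d^4}\le(\max_k d_k^2)\,\overline{d^2}$ and re-feeding the sharpened value of $\max_k|d_k|$. Making this iteration rigorous, rather than the algebra of the weighted estimate, is where essentially all the effort lies.
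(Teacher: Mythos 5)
Your reduction of \eqref{eq:19} to an $\ell^2$ bound $\sum_k d_k^2\lesssim \epsilon_N/(Na_{N/2})$ is sound, and the weighted estimate $\sum_{0<|n|\le M}|b_n|^2/n^2\le C_0\epsilon_N/(M^2a_M)$ is exactly the right use of \eqref{eq:12}. The gap is the one you flag yourself, and it is not a removable technicality: the bootstrap cannot be started from the information available. Your Parseval inequality has the shape $\overline{d^2}\le A+B\,\overline{d^2}^{\,2}$ with $A\sim \epsilon_N/(N^2a_{N/2})$ and $B\sim N^3$; such an inequality confines $\overline{d^2}$ to the small branch $\overline{d^2}\le 2A$ only if one knows \emph{a priori} that $\overline{d^2}<1/(2B)\sim N^{-3}$. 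But the discrepancy route can never supply this: the star discrepancy of \emph{any} $N$-point set (including $\Tg_N$ itself) is at least of order $1/N$, so Erd\H{o}s--Tur\'an can never certify $\max_k|d_k|=o(1/N)$, i.e.\ nothing better than $\overline{d^2}=O(N^{-2})$. Plugging $O(N^{-2})$ into your inequality returns $\overline{d^2}=O(N^{-1})$ --- the iteration degrades rather than contracts --- and the fourth-moment refinement only lowers the threshold to $\max_k|d_k|\lesssim N^{-3/2}$, still a factor $N^{1/2}$ out of reach. The root cause is that for $|n|\sim N/2$ and $|d_k|\sim 1/N$ the phases $2\pi n d_k$ are of order one, so the linearization $e^{i2\pi nd_k}\approx 1+i2\pi nd_k$ is invalid at precisely the top frequencies that Parseval forces you to use.

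The paper circumvents this by never linearizing. It sets $z_k=e^{i2\pi x_k}$, converts the bounds $|b_n|\le(\epsilon_N/a_n)^{1/2}$ on the power sums into inductive bounds \eqref{eq:23} on the elementary symmetric polynomials $e_k$ via Newton's identities (with \eqref{eq:12} controlling the convolution terms), and concludes that $P_N(t)=\prod_k(t-z_k)$ in \eqref{eq:25} is uniformly close to $t^N-1$ on small circles about the $N$-th roots of unity; Rouch\'e's theorem then places exactly one $z_k$ within distance $\sim \epsilon_N^{1/2}/[Na_{N/2}]^{1/2}$ of each root of unity. This produces the sup-norm bound \eqref{eq:19} directly, with exact algebraic identities replacing the Taylor expansion whose remainder your argument cannot absorb. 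To salvage your route you would need an independent mechanism giving $\max_k|d_k|=o(N^{-3/2})$ \emph{before} linearizing, and neither \eqref{eq:18}, nor \eqref{eq:12}, nor discrepancy theory provides one.
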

\begin{proof}
We first note that for any $p\leq2$ we have by (\ref{eq:12})
\begin{equation}
\label{eq:20}
  \max_{1\leq m\leq k} \frac{1}{m^pa_m} \leq k^{2-p}\max_{1\leq m\leq k}
  \frac{1}{m^2a_m}\leq \frac{C_0}{k^pa_k}\,.
\end{equation}

 Let $b_n$ be given by \eqref{eq:7}. By (\ref{eq:4}) we have that
  \begin{displaymath}
    |b_n| \leq \Big[\frac{\epsilon_N}{a_n}\Big]^{1/2} \,.
  \end{displaymath}
Let $z_k=e^{i2\pi x_k}$. Without loss of generality we assume that
\begin{equation}
\label{eq:21}
  \prod_{i=1}^N z_i =1 \,,
\end{equation}
otherwise we translate $\{x_k\}_{k=1}^N$ by the necessary amount. Let
further $e_0=1$ and
\begin{equation}
\label{eq:22}
  e_k(z_1,\ldots,z_N) = \sum_{1\leq i_1<i_2<\ldots<i_k\leq N}  z_{i_1}\cdots  z_{i_k} \,.
\end{equation}
By Newton's identities we have
\begin{displaymath}
  ke_k = N \sum_{m=1}^k (-1)^{m-1}e_{k-m}b_m\,.
\end{displaymath}
For $k=1$ we thus obtain
\begin{displaymath}
  e_1 = Nb_1
\end{displaymath}
Invoking inductive arguments we suppose that for any $1\leq k\leq K-1$,
where \break $2\leq K\leq N/2-1$
\begin{equation}
\label{eq:23}
\Big|e_k-(-1)^{k-1}\frac{N}{k}b_k\Big|\leq2C_0\frac{N^2}{k^2}\frac{\epsilon_N}{a_k}\,.
\end{equation}
Then, 
\begin{displaymath}
  \Big|e_K-(-1)^{K-1}\frac{N}{K}b_K\Big|\leq N^2 \sum_{m=1}^{K-1}\frac{1}{m(K-m)} |b_{K-m}|\,|b_m|
  +2C_0N^2 \epsilon_N\sum_{m=1}^{K-1} \frac{|b_m|}{m}\frac{1}{(K-m)^2a_{K-m}} \,.
\end{displaymath}
For the first sum on the right-hand side we have, by (\ref{eq:12})
\begin{displaymath}
  N^2\sum_{m=1}^{K-1}\frac{1}{m(K-m)} |b_{K-m}|\,|b_m| \leq N^2\sum_{m=1}^{K-1}\frac{1}{a_mm^2} a_m|b_m|^2\leq
N^2\max_{1\leq m\leq K}\frac{\epsilon_N}{m^2a_m}\leq
\frac{C_0N^2}{K^2}\frac{\epsilon_N}{a_K}\,, 
\end{displaymath}
whereas for the second term we have, using \eqref{eq:12} once again,
\begin{multline*}
  N^2 \epsilon_N\sum_{m=1}^{K-1}
  \frac{|b_m|}{m}\frac{1}{(K-m)^2a_{K-m}} \leq C_0\frac{N^2 \epsilon_N}{K^2a_K}\sum_{m=1}^{K-1}
  \frac{|b_m|}{m} \\\leq C_0\frac{N^2 \epsilon_N}{K^2a_K}\Big[\sum_{m=1}^{K-1}\frac{1}{a_mm^2} \Big]^{1/2}
\Big[\sum_{m=1}^{K-1} a_m|b_m|^2\Big]^{1/2}\leq 2C_0^{3/2}\frac{N^2}{K^{5/2}} \frac{\epsilon_N^{3/2}}{a_K^{3/2}} \,.
\end{multline*}
Combining the above then yields
\begin{displaymath}
\Big|e_K-\frac{N}{K}b_K\Big|\leq C_0\frac{N^2}{K^2}\frac{\epsilon_N}{a_K}
 \bigg[1+2C_0^{1/2}K^{-1/2}\frac{\epsilon_N^{1/2}}{a_K^{1/2}}\bigg] \,.   
\end{displaymath}
From (\ref{eq:18}) we now get (\ref{eq:23}) for sufficiently large $N$.
For $N/2+1\leq k\leq N$ we have by (\ref{eq:21})
\begin{displaymath}
  e_k = \bar{e}_{N-k}  \,.
\end{displaymath}
Hence, for all $N/2+1\leq k\leq N$
\begin{equation}
\label{eq:24}
  \Big|e_k-\frac{N}{k}\bar{b}_{N-k}\Big|\leq2C_0\frac{N^2\epsilon_N}{k^2a_k}
\end{equation}

We now attempt to use the identity 
\begin{equation}
\label{eq:25}
  P_N(t) = \prod_{i=1}^N (t-z_i) = \sum_{k=0}^N(-1)^ke_kt^{N-k} \,,
\end{equation}
to obtain an approximation for $\{z_i\}_{i=1}^N$. 
Let $\alpha_p=e^{i2\pi p/N}$. Let further $S_\rho=\partial B(\alpha_p,\rho)$ for some $0<\rho\ll1/N$
as $N\to\infty$. On $S_\rho$ we have 
\begin{equation}
\label{eq:26}
  |t^N-1 -N\alpha_p^{N-1}(t-\alpha_p)|\leq CN^2\rho^2 \,.
\end{equation}
From (\ref{eq:23}) and (\ref{eq:24}) we obtain that, for sufficiently
large $N$, using the fact that $|t^k|\leq2$ on $S_\rho$ for all $1\leq k\leq N-1$,
\begin{displaymath}
  \max_{t\in S_\rho}\Big|\sum_{k=1}^{N-1}(-1)^ke_kt^{N-k}\Big| \leq
  4N\sum_{k=1}^{N/2}\frac{|b_k|}{k} +
  4C_0N^2\sum_{k=1}^{N/2}\frac{\epsilon_N}{k^2a_k}
\end{displaymath}
For the first term on the right-hand-side we have by (\ref{eq:12})
\begin{displaymath}
  4N\sum_{k=1}^{N/2}\frac{|b_k|}{k}\leq 4N
  \Big[\sum_{k=1}^{N/2}\frac{1}{k^2a_k}\Big]^{1/2} \Big[\sum_{k=1}^{N/2}
  a_k|b_k|^2\Big]^{1/2}\leq
  4C_0N^{1/2}\frac{\epsilon_N^{1/2}}{a_{N/2}^{1/2}} \,.
\end{displaymath}
For the second term we conclude from (\ref{eq:12}) that there exists
$C>0$ such that for sufficiently large $N$
\begin{displaymath}
  4C_0N^2\sum_{k=1}^{N/2}\frac{\epsilon_N}{k^2a_k}\leq C \frac{\epsilon_N}{a_{N/2}} \,.
\end{displaymath}
Consequently, by (\ref{eq:18})
\begin{displaymath}
  \max_{t\in S_\rho}\Big|\sum_{k=1}^{N-1}(-1)^ke_kt^{N-k}\Big| \leq C\frac{[N\epsilon_N]^{1/2}}{a_{N/2}^{1/2}}\,.
\end{displaymath}

We thus conclude that whenever 
\begin{displaymath}
  \frac{\epsilon_N^{1/2}}{[Na_{N/2}]^{1/2}} <  \frac{\rho}{2} \ll    \frac{1}{N}\,,
\end{displaymath}
we have by \eqref{eq:26} that
\begin{displaymath}
   \min_{t\in S_\rho}|t^N-1| \geq \frac{N\rho}{2}>\max_{t\in S_\rho}\Big|\sum_{k=1}^{N-1}(-1)^ke_kt^{N-k}\Big| \,.
\end{displaymath}
By Rouch\'e's theorem and \eqref{eq:25}, $P_N(t)$ must have a unique simple zero inside
$S_\rho$, which readily proves (\ref{eq:19}).
\end{proof}

We recall \eqref{eq:21} which readily yields 
\begin{equation}
  \label{eq:27}
\sum_{n=1}^N x_n=0 \,,
\end{equation}
and let $\Tg_N$ be given by \eqref{eq:13}.  
Let $\Sg_N=\{x_n\}_{n=1}^N$ denote the minimizer of $\Jg$ satisfying
\eqref{eq:27}. Then, set ${\bs \delta}_N=\Sg_N-\Tg_N$. Let further
\begin{displaymath}
  {\bs V}_n = \nabla b_n(\Tg_N)= i\frac{2\pi n}{N}
  \begin{bmatrix}
    e^{i\frac{\pi n}{N}} \\
    e^{i\frac{3\pi n}{N}} \\
    \vdots  \\
     e^{i\frac{(2N-1)\pi n}{N}}
  \end{bmatrix}\,.
\end{displaymath}
We can then state
\begin{lemma}
Suppose that $\{x_n\}_{n=1}^N\in[-1/2,1/2]^N$ and $f\in L^2(-1/2,1/2)$
satisfy \eqref{eq:12} and
\begin{equation}
  \label{eq:28}
\lim_{N\to\infty}N^2 \frac{\epsilon_N}{a_{N/2}}=0\,.
\end{equation}
(Note that \eqref{eq:28} is a stronger restriction than \eqref{eq:18}.)
 Then, there exist $C>0$ and $N_0>0$, such that for all $N>N_0$
 \begin{equation}
   \label{eq:29}
\sum_{n=1}^{\lfloor N/2\rfloor}\frac{1}{n^2}|b_n|^2 \geq \frac{C}{N}\|{\bs \delta}_N\|_2^2 \,,
 \end{equation}
where $\lfloor\cdot\rfloor:\R\to\Z$ denotes the integer part. 
\end{lemma}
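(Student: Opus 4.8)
The plan is to linearize the Fourier coefficients $b_n$ about the equidistant configuration $\Tg_N$ and to show that the \emph{linear} part alone already controls $\|\bs\delta_N\|_2^2$ through Plancherel's identity, the quadratic remainder being negligible thanks to the sharp sup-norm rate supplied by the preceding lemma. The starting point is that for $1\le n\le\lfloor N/2\rfloor$ the index $n$ is not a multiple of $N$, so $b_n(\Tg_N)=0$. Writing $\delta_{N,k}=x_k-x_k^{(T)}$ with $x_k^{(T)}=(2k-N-1)/(2N)$ and expanding $e^{i2\pi n\delta_{N,k}}=1+i2\pi n\delta_{N,k}+r_{n,k}$, where $|r_{n,k}|\le 2\pi^2n^2\delta_{N,k}^2$, the definition $\bs V_n=\gr b_n(\Tg_N)$ gives the exact decomposition $b_n=\bs V_n\cdot\bs\delta_N+R_n$ with $R_n=\tfrac1N\sum_k e^{i2\pi n x_k^{(T)}}r_{n,k}$.

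For the linear part, set $\hat\delta(n)=\sum_{k=1}^N e^{i2\pi nk/N}\delta_{N,k}$; a direct computation gives $|\bs V_n\cdot\bs\delta_N|=\tfrac{2\pi n}{N}|\hat\delta(n)|$, hence $\tfrac1{n^2}|\bs V_n\cdot\bs\delta_N|^2=\tfrac{4\pi^2}{N^2}|\hat\delta(n)|^2$. Because $\bs\delta_N$ is real one has $\hat\delta(N-n)=\overline{\hat\delta(n)}$, and because $\sum_k\delta_{N,k}=0$ (both $\Sg_N$ and $\Tg_N$ have vanishing mean by \eqref{eq:27}) one has $\hat\delta(0)=0$. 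Plancherel then gives $\sum_{n=1}^{N-1}|\hat\delta(n)|^2=N\|\bs\delta_N\|_2^2$, and the conjugate symmetry yields $\sum_{n=1}^{\lfloor N/2\rfloor}|\hat\delta(n)|^2\ge\tfrac N2\|\bs\delta_N\|_2^2$ (the middle term $n=N/2$, present when $N$ is even, only helps). Consequently the linear contribution already has the desired form, namely $\sum_{n=1}^{\lfloor N/2\rfloor}\tfrac1{n^2}|\bs V_n\cdot\bs\delta_N|^2\ge\tfrac{2\pi^2}{N}\|\bs\delta_N\|_2^2$.

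It remains to absorb the remainder, and this is where the stronger hypothesis \eqref{eq:28} is used. The preceding lemma bounds the displacement up to translation by $\rho_N=C\epsilon_N^{1/2}/[Na_{N/2}]^{1/2}$; since $\sum_k\delta_{N,k}=0$ forces the optimal translation there to be $O(\rho_N)$, one upgrades this to $\|\bs\delta_N\|_\infty\le 2\rho_N$, which under \eqref{eq:28} is $o(N^{-3/2})$. Using $|r_{n,k}|\le 2\pi^2n^2\|\bs\delta_N\|_\infty|\delta_{N,k}|$ together with $\|\bs\delta_N\|_1\le\sqrt N\,\|\bs\delta_N\|_2$ gives $|R_n|\le \tfrac{2\pi^2n^2}{\sqrt N}\|\bs\delta_N\|_\infty\|\bs\delta_N\|_2$, whence, since $\sum_{n\le N/2}n^2=O(N^3)$, $\sum_{n=1}^{\lfloor N/2\rfloor}\tfrac1{n^2}|R_n|^2\le C\,N^2\|\bs\delta_N\|_\infty^2\|\bs\delta_N\|_2^2=o(N^{-1})\|\bs\delta_N\|_2^2$. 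Finally $|b_n|^2\ge\tfrac12|\bs V_n\cdot\bs\delta_N|^2-|R_n|^2$, summed against $1/n^2$, yields $\sum_{n=1}^{\lfloor N/2\rfloor}\tfrac1{n^2}|b_n|^2\ge\big(\tfrac{\pi^2}{N}-o(N^{-1})\big)\|\bs\delta_N\|_2^2\ge\tfrac CN\|\bs\delta_N\|_2^2$ for $N>N_0$, which is \eqref{eq:29}.

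The main obstacle is precisely the remainder estimate. The second-order terms carry a factor $n^2$, and $\sum_{n\le N/2}n^2\sim N^3/24$ is large, so the quadratic part is tamed only because \eqref{eq:28}, strictly stronger than \eqref{eq:18}, drives $\|\bs\delta_N\|_\infty$ below $N^{-3/2}$; then $N^2\|\bs\delta_N\|_\infty^2=o(N^{-1})$ beats the linear coefficient $\tfrac{2\pi^2}{N}$. Under the weaker \eqref{eq:18} one would only get $\|\bs\delta_N\|_\infty=o(N^{-1})$, making the remainder merely $o(1)\cdot\|\bs\delta_N\|_2^2$, which need not be dominated by the linear term and would defeat the coercivity bound. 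Thus the crux is to feed the sharp sup-norm rate of the preceding lemma into the quadratic remainder rather than settling for the coarser estimate.
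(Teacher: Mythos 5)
Your proposal is correct and follows essentially the same route as the paper: linearize $b_n$ around $\Tg_N$, prove the coercivity bound $\sum_{n\le N/2}n^{-2}|\bs V_n\cdot\bs\delta_N|^2\ge \tfrac{2\pi^2}{N}\|\bs\delta_N\|_2^2$ (your Plancherel computation is the same as the paper's expansion of $\bs\delta_N$ in the basis $\{\bs V_n\}$ with conjugate-symmetric coefficients), bound the quadratic remainder by $CN^2\|\bs\delta_N\|_\infty^2\|\bs\delta_N\|_2^2$, and absorb it using \eqref{eq:19} together with the stronger hypothesis \eqref{eq:28}. The only difference is that you make explicit a detail the paper leaves implicit, namely that the zero-mean normalization \eqref{eq:27} forces the optimal translation in \eqref{eq:19} to be of the same small order, so the sup-norm bound applies to $\bs\delta_N$ itself.
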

\begin{proof}
{\em Step 1:}   Show that there exists $C>0$ such that 
  \begin{equation}
    \label{eq:30}
\sum_{n=1}^{\lfloor N/2\rfloor} \frac{1}{n^2}|{\bs V}_n \cdot{\bs \delta}_N|^2\geq \frac{2\pi^2}{N}\|{\bs \delta}_N\|_2^2 \,.
  \end{equation}

Since $\{V_n\}_{n=1}^N$ spans $\C^N$, we may write, in view of
\eqref{eq:27},
\begin{displaymath}
  {\bs \delta}_N=\sum_{n=1}^{N-1}\frac{N^{1/2}}{2\pi n}C_n{\bs V}_n\,,
\end{displaymath}
where $\{C_n\}_{n=1}^{N-1}\in\C^{N-1}$ satisfies, since ${\bs \delta}_N\in\R^N$
\begin{displaymath}
  C_n=\bar{C}_{N-n} \quad \forall1\leq n\leq N/2 \,.
\end{displaymath}
From the above it can be easily verified that
\begin{displaymath}
  \|{\bs \delta}_N\|_2^2=\sum_{n=1}^{N-1}|C_n|^2 \leq 2\sum_{n=1}^{\lfloor N/2\rfloor}|C_n|^2\,.
\end{displaymath}
The proof of \eqref{eq:30} can now be easily completed by using the
above and the fact that 
\begin{displaymath}
  \sum_{n=1}^{\lfloor N/2\rfloor} \frac{1}{n^2}|{\bs V}_n \cdot{\bs \delta}_N|^2=\frac{4\pi^2}{N}\sum_{n=1}^{\lfloor N/2\rfloor}|C_n|^2\,.
\end{displaymath}

{\em Step 2:} Prove that there exists $C>0$ such that
\begin{equation}
\label{eq:31}
  \sum_{n=1}^{\lfloor N/2\rfloor}\frac{1}{n^2}|b_n-{\bs V}_n \cdot{\bs \delta}_N|^2 \leq
  CN^2\|{\bs \delta}_N\|^2_\infty\|{\bs \delta}_N\|_2^2\,.
\end{equation}

We write the Taylor expansion of $b_n$, for $1\leq n\leq N/2$, around
$\Tg_N$
\begin{displaymath}
b_n =   {\bs V}_n \cdot{\bs \delta}_N -
\frac{2\pi^2n^2}{N}\sum_{k=1}^Ne^{i2\pi n\tilde{x}_k}|\delta_N^k|^2 \,,
\end{displaymath}
where
\begin{displaymath}
  \tilde{x}_k= \frac{2k-1}{2N}+ \theta\Big(x_k- \frac{2k-1}{2N}\Big)\,.
\end{displaymath}
Consequently,
\begin{displaymath}
  \frac{1}{n}|b_n-{\bs V}_n \cdot{\bs \delta}_N|\leq C\|{\bs \delta}_N\|^2_2\leq CN^{1/2}\|{\bs
    \delta}_N\|_\infty \,\|{\bs \delta}_N\|_2\,,
\end{displaymath}
and hence
\begin{equation}
\label{eq:32}
   \sum_{n=1}^{\lfloor N/2\rfloor}\frac{1}{n^2}|b_n-{\bs V}_n \cdot{\bs \delta}_N|^2 \leq
   CN^2 \|{\bs  \delta}_N\|_\infty^2 \,\|{\bs \delta}_N\|_2^2\,.
\end{equation}
Substituting the above  into \eqref{eq:32}
yields \eqref{eq:31}.

{\em Step 3:} Prove \eqref{eq:30}.

As
\begin{displaymath}
  \sum_{n=1}^{\lfloor N/2\rfloor}\frac{|b_n|^2}{n^2} \geq \frac{1}{2}
  \sum_{n=1}^{\lfloor N/2\rfloor}\frac{|{\bs V}_n \cdot{\bs \delta}_N|^2}{n^2} -
  \sum_{n=1}^{\lfloor N/2\rfloor}\frac{|b_n-{\bs V}_n \cdot{\bs \delta}_N |^2}{n^2}  \,,
\end{displaymath}
we obtain from \eqref{eq:30} and \eqref{eq:31} that
\begin{equation}
\label{eq:33}
  \sum_{n=1}^{\lfloor N/2\rfloor}\frac{|b_n|^2}{n^2} \geq \frac{C}{N}\Big(\|{\bs \delta}_N\|_2^2-N^3\|{\bs \delta}_N\|^2_\infty\|{\bs \delta}_N\|_2^2)
\end{equation}
We can now conclude \eqref{eq:29} from \eqref{eq:33}, \eqref{eq:28}
and  \eqref{eq:19}. 
\end{proof}

We now establish that a minimizing set consists of $N$ equidistant points.
\begin{proof}[Proof of Theorem \ref{thm:1d}]
  Let $b_n=b_n(\Sg_N)$. We first observe that
  \begin{equation}
\label{eq:34}
    \Jg(\Sg_N)-\Jg(\Tg_N)\geq \sum_{n=1}^{\lfloor N/2\rfloor} a_n|b_n|^2-\sum_{n=1}^\infty a_{nN}(1-|b_{nN}|^2)\,.
  \end{equation}
We first obtain a lower bound for the first term on the
right-hand-side using \eqref{eq:29}
\begin{equation}
\label{eq:35}
   \sum_{n=1}^{\lfloor N/2\rfloor} a_n|b_n|^2 \geq CN^2a_{N/2}\sum_{n=1}^{\lfloor N/2\rfloor}
   \frac{|b_n|^2}{n^2}\geq CNa_{N/2} \|{\bs \delta}_N\|_2^2\,. 
\end{equation}

We next obtain an upper bound for the second term on the
right-hand-side of \eqref{eq:35}. Since $\Tg_N$ is a maximum point for
$|b_{nN}|^2$ for all $n\in\N$ where $b_{nN}(\Sg_N)=1$, we obtain that
\begin{equation}
\label{eq:36}
  1-|b_{nN}|^2 = \frac{1}{2}{\bs \delta}_N\cdot\tilde{\Mg}_{nN}{\bs \delta}_N\,,
\end{equation}
where the Hessian matrix $\tilde{\Mg}_{nN}$ is given by
\begin{displaymath}
  \tilde{\Mg}_{nN} = D^2(|b_{nN}|^2)(\tilde{\Sg}_N)\,,
\end{displaymath}
in which
\begin{displaymath}
  \tilde{\Sg}_N=\Tg_N+\theta(\Sg_N-\Tg_N)\,.
\end{displaymath}
Set
\begin{displaymath}
  {\bs d}_N=
  \begin{bmatrix}
    |\delta_N^1| \\
    |\delta_N^2| \\
    \vdots \\
    |\delta_N^N|
  \end{bmatrix}
\text{ and let } \Mg_{nN}^0=8\pi^2n^2
\begin{bmatrix}
  (N+1) & 1 & 1 & \cdots & 1 \\
  1 & (N+1) & 1 & \cdots & 1 \\
  \vdots  & \vdots    & \vdots & \ldots   & \vdots \\
  1 & 1 & 1 & \ldots & (N+1) 
\end{bmatrix}
\,.
\end{displaymath}
Further, let the elements of $\Mg\in M^{N\times N}$ be given by
$\Mg_{nk}=|\tilde{\Mg}_{nk}|$. Clearly
\begin{displaymath}
  |{\bs \delta}_N\cdot\tilde{\Mg}_{nN}{\bs \delta}_N|\leq {\bs d}_N\cdot\Mg_{nN}{\bs
    d}_N\leq {\bs d}_N\cdot\Mg_{nN}^0{\bs
    d}_N\,,
\end{displaymath}
and hence, by \eqref{eq:36}
\begin{displaymath}
  1-|b_{nN}|^2\leq 16\pi^2n^2N \|{\bs \delta}_N\|_2^2 \,.
\end{displaymath}
Consequently,
\begin{equation}
\label{eq:37}
\sum_{n=1}^\infty a_{nN}(1-|b_{nN}|^2)\leq CN\|{\bs
  \delta}_N\|_2^2\sum_{n=1}^\infty n^2a_{nN} \,. 
\end{equation}
By \eqref{eq:18} it holds that for sufficiently large $N$
\begin{displaymath}
  a_N\leq\frac{a_{N/2}}{N} \,.
\end{displaymath}
Let $l=\lfloor\log_2n\rfloor+1$. From the above we may conclude that
\begin{displaymath}
  a_{nN}\leq\frac{a_{\lfloor2^{-l}nN\rfloor}}{N^l} \,.
\end{displaymath}
Combining the above with \eqref{eq:12} then yields 
\begin{displaymath}
  a_{nN}\leq C\frac{a_{N/2}}{N^l} \,.
\end{displaymath}
Consequently
\begin{displaymath}
  \sum_{n=1}^\infty n^2a_{nN} \leq Ca_{N/2}\sum_{l=1}^\infty
  \Big[\frac{8}{N}\Big]^l\leq\frac{C}{N}a_{N/2} \,,
\end{displaymath}
which when substituted into \eqref{eq:37} yields
\begin{displaymath}
  \sum_{n=1}^\infty a_{nN}(1-|b_{nN}|^2)\leq C\|{\bs
  \delta}_N\|_2^2\,a_{N/2} \,,
\end{displaymath}
Substituting the above, together with \eqref{eq:35} into \eqref{eq:34} yields, for sufficiently
large $N$,
\begin{displaymath}
  \Jg(\Sg_N)-\Jg(\Tg_N)\geq0 \,,
\end{displaymath}
where equality is achieved only when $\Sg_N=\Tg_N$.  The Theorem
is proved.
\end{proof}

\section{Two dimensions}
\label{sec:3}
Let $f\in L^2_{loc}(\R^2)$ be given by
\begin{equation}
\label{eq:38}
  f(x_1,x_2)=\sum_{(p,q)\in\Z^2} a_{pq}e(px_1/\sqrt{3}+qx_2) \,,
\end{equation}
where $a_{pq}\in\R_+$ for all $(p,q)\in\Z^2$ and
\begin{displaymath}
  e(s)=e^{i2\pi s}\,.
\end{displaymath}
Obviously, $f$ is periodic in $\R^2$ with the following rectangle as its
basic cell
\begin{equation}
\label{eq:39}
  R=[0,\sqrt{3}]
  \times[0,1]  \,.
\end{equation}
More precisely, it holds that
\begin{displaymath}
  f(x_1,x_2)=f(x_1+\sqrt{3},x_2)=f(x_1,x_2+1)\,.
\end{displaymath}
Let for some $L\in[2,\infty)\cap\Z$
\begin{displaymath}
  \epsilon_L = \frac{1}{2}\sum_{(p,q)\in\Z^2\setminus\{(0,0)\}} a_{pL,qL}[1+(-1)^{p+q}] \,.
\end{displaymath}
and 
\begin{displaymath}
  \tilde{\epsilon}_L = \frac{L^2}{2}\sum_{(p,q)\in\Z^2\setminus\{(0,0)\}} (p^2+q^2)a_{pL,qL}[1+(-1)^{p+q}] \,.
\end{displaymath}

\subsection{Upper bound}
\label{sec:upper}

We begin with the following simple observation. 
\begin{lemma}
\label{lem:lattice}
Let $N=2L^2$. Let $\{x_{km}\}_{(k,m)=(1,1)}^{(L,2L)}\in R\times R$, where $R$
is given by \eqref{eq:39}, be arranged on the following lattice, which
is given by \eqref{eq:14}. We repeat here \eqref{eq:14} for the
convenience of the reader
  \begin{displaymath}
    (x_{kj}^1,x_{kj}^2)=
    \begin{cases}
      \Big(\sqrt{3}\frac{k-1}{L},\frac{j-1}{2L}\Big) & j \text{ odd }\\
   \Big({\sqrt{3}\frac{k-\frac{1}{2}}L},\frac{j-1}{2L}\Big) &  j \text{ even }
    \end{cases}
  \end{displaymath}
 Then
 \begin{equation}
\label{eq:40}
   \Jg(\mu)-\Jg(\tilde{\mu}_L) =
   \frac{1}{2}\sum_{(p,q)\in\Z^2\setminus\{(0,0)\}}[1+(-1)]^{p+q}] a_{Lp,Lq} \,,
 \end{equation}
where
\begin{displaymath}
  \mu=\frac{1}{N}\sum_{k=1}^L\sum_{j=1}^{2L}\delta(x_1-x_{kj}^1)\delta(x_2-x_{kj}^2)\,,
\end{displaymath}
and $\tilde{\mu}_L=[3]^{-1/2}\mu_L$ ($\mu_L$ being the Lebesgue measure). 
\end{lemma}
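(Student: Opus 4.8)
The statement is a direct Parseval-type computation, so the plan is first to put it in the form of the identity behind Lemma~1 and then to evaluate the Fourier coefficients of the lattice measure explicitly. First I would record that $\tilde{\mu}_L=3^{-1/2}\mu_L$ is exactly the uniform probability measure on $R$, since $|R|=\sqrt3$ gives $\tilde{\mu}_L(R)=3^{-1/2}\sqrt3=1$. Then, repeating verbatim the argument leading to \eqref{eq:5} but with the cell $R$ and the orthogonal system $\{e(px_1/\sqrt3+qx_2)\}$ (which satisfies $\int_R e(\cdots)\,\overline{e(\cdots)}=\sqrt3\,\delta$), I would expand $f(x-y)=\sum_{pq}a_{pq}e(px_1/\sqrt3+qx_2)\overline{e(\cdots)}$ in \eqref{eq:5} and use orthogonality to obtain
\begin{displaymath}
  \Jg(\mu)-\Jg(\tilde{\mu}_L)=\sum_{(p,q)\in\Z^2\setminus\{(0,0)\}}a_{pq}\,|b_{pq}|^2,
\end{displaymath}
where $b_{pq}=N^{-1}\sum_{k,j}e(px^1_{kj}/\sqrt3+qx^2_{kj})$ is the Fourier coefficient of $\mu$, the analogue of \eqref{eq:7}. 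Note that no extra $|R|^2$ factor appears, because this $b_{pq}$ is the \emph{unnormalized} coefficient $\int_R e(\cdots)\,d\mu$; subtracting $\tilde{\mu}_L$ merely removes the $(p,q)=(0,0)$ mode.

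The computational heart of the proof is the evaluation of $b_{pq}$ for the triangular lattice. I would split the sum over $j$ into odd rows ($j=2i-1$) and even rows ($j=2i$), $i=1,\dots,L$. In the odd rows $x^1_{kj}/\sqrt3=(k-1)/L$ and $x^2_{kj}=(i-1)/L$, so the contribution factors as $\big(\sum_{k=1}^L e(p(k-1)/L)\big)\big(\sum_{i=1}^L e(q(i-1)/L)\big)$. In the even rows $x^1_{kj}/\sqrt3=(k-\tfrac12)/L$ and $x^2_{kj}=(2i-1)/(2L)$, giving the same product but with each geometric factor carrying an extra half-step phase $e(-p/2L)$ resp.\ $e(-q/2L)$. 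Using $\sum_{r=0}^{L-1}e(pr/L)=L$ when $L\mid p$ and $0$ otherwise, both rows vanish unless $L\mid p$ and $L\mid q$; writing $p=Lp'$, $q=Lq'$, the half-step phases collapse to $e(-p'/2)=(-1)^{p'}$ and $(-1)^{q'}$, and I arrive at the selection rule
\begin{displaymath}
  b_{Lp',Lq'}=\tfrac12\big(1+(-1)^{p'+q'}\big),\qquad b_{pq}=0 \ \text{ if } L\nmid p \text{ or } L\nmid q .
\end{displaymath}

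Finally I would substitute this into the Parseval identity. Since $b_{Lp',Lq'}$ is real, $|b_{Lp',Lq'}|^2=\tfrac14(1+(-1)^{p'+q'})^2$, and the idempotence $(1+(-1)^{p'+q'})^2=2(1+(-1)^{p'+q'})$ gives $|b_{Lp',Lq'}|^2=\tfrac12(1+(-1)^{p'+q'})$. Relabelling $(p',q')$ as $(p,q)$ then yields exactly \eqref{eq:40}. I do not anticipate any genuine analytic obstacle: every sum is finite and the only input needed is Lemma~1. The one place demanding care is the bookkeeping in the Fourier evaluation — tracking the half-integer shift of the even rows, which is the source of the $(-1)^{p+q}$ and hence of the extinction of the ``odd'' reciprocal-lattice modes, and keeping the normalization of $\tilde{\mu}_L$ consistent so that no spurious factor of $|R|=\sqrt3$ survives.
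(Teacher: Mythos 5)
Your proof is correct and takes essentially the same route as the paper: the Parseval identity of Lemma 1 adapted to the cell $R$, splitting the triangular lattice into two square sublattices (odd/even rows), the geometric-sum selection rule forcing $L\mid p$ and $L\mid q$ with the half-step phase producing $(-1)^{p+q}$, and the idempotence $(1+(-1)^{p+q})^2=2(1+(-1)^{p+q})$. If anything, your bookkeeping is cleaner: with your unnormalized $b_{pq}$ the Parseval identity correctly carries no prefactor, whereas the paper pairs normalized coefficients with a prefactor written as $\sqrt{3}$ (which should be $|R|^2=3$ for consistency), though both conventions land on the same formula \eqref{eq:40}.
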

\begin{proof}
Given the Fourier expansion \eqref{eq:38} we have, as in \eqref{eq:4}
\begin{equation}
\label{eq:41}
  \Jg(\mu)-\Jg(\tilde{\mu}_L) =\sqrt{3} \sum_{(p,q)\in\Z^2\setminus(0,0)}
  a_{pq}|b_{pq}|^2\,,
\end{equation}
where
\begin{displaymath}
   b_{pq}(\mu) = \frac{1}{2\sqrt{3}L^2}\sum_{k=1}^L\sum_{j=1}^{2L}e(\sqrt{1/3}px_{kj}^1+qx_{kj}^2) \,.
\end{displaymath}
A simple computation yields
\begin{multline*}
  b_{pq}=
  \frac{1}{2\sqrt{3}L^2}\Big[
  \sum_{k,j=0}^{L-1}e^{2\pi i\Big(p\frac{k}{L}+q\frac{j}{L}\Big)} +
  \sum_{k,j=0}^{L-1}e^{2\pi i\Big(p\frac{2k+1}{2L}+q\frac{2j+1}{2L}\Big) }
  \Big]=\\\frac{1+(-1)^{\frac{p+q}{L}}}{2\sqrt{3}}
      \begin{cases}
      1 & \Big(\frac{p}{L},\frac{q}{L}\Big)\in\Z^2 \\
      0 & \text{otherwise} \,.
    \end{cases}
\end{multline*}
Substituting the above into \eqref{eq:41} yields \eqref{eq:40}\,.
\end{proof}

Suppose now that there exists $\lambda_0>0$ such that for all $\lambda<\lambda_0$
\eqref{eq:16} and \eqref{eq:17} hold true. We repeat here \eqref{eq:16}
and \eqref{eq:17} for the convenience of the reader
  \begin{displaymath}
\lim_{L\to\infty} e^{\lambda L}\frac{\tilde{\epsilon}_L}{a_{2L/3,L}} =0 \,,
  \end{displaymath}
and there exists $C_0>0$ and $L_0>0$ such that for all
$L>L_0$
\begin{displaymath}
\max_{
  \begin{subarray}{c}
    |m|\leq \frac{2L}{3} \\
    |n|\leq |L|
  \end{subarray}}
a_{mn}\geq C_0a_{2L/3,L}
\end{displaymath}
We can now conclude that
\begin{corollary}
  Let $N=2L^2$ for some $L\in\N$. Let $\mu_m(N)$ denote the minimizing
  measure in the set \eqref{eq:6}. Suppose that $f$ satisfies
  \eqref{eq:38}, \eqref{eq:16}, and \eqref{eq:17}.  Then, there exists
  $\lambda_0>0$ such that for any $\lambda<\lambda_0$
  \begin{equation}
    \label{eq:42}
\lim_{L\to\infty}e^{\lambda L}\max_{
  \begin{subarray}
    |p|\leq2L/3 \\
    |q|\leq L
  \end{subarray}}
|b_{pq}(\mu_m)|=0 \,.
  \end{equation}
\end{corollary}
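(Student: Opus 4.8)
The plan is to derive a pointwise bound on each Fourier coefficient $b_{pq}(\mu_m)$ from the energy gap $\Jg(\mu_m)-\Jg(\tilde{\mu}_L)$, and then to show that this gap is exponentially small thanks to \eqref{eq:16}. First I would invoke minimality: the triangular lattice of \rlemma{lem:lattice} is an admissible configuration in the set \eqref{eq:6}, so $\Jg(\mu_m)\le\Jg(\mu_{\mathrm{lat}})$, and \eqref{eq:40} gives $\Jg(\mu_m)-\Jg(\tilde{\mu}_L)\le\epsilon_L$, where $\epsilon_L=\tfrac12\sum_{(p,q)\neq(0,0)}a_{pL,qL}[1+(-1)^{p+q}]$.

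Next, the Fourier identity \eqref{eq:41} together with the positivity $a_{pq}>0$ shows that a single term is dominated by the whole sum, so for each $(p,q)\neq(0,0)$ one has $\sqrt3\,a_{pq}|b_{pq}(\mu_m)|^2\le \Jg(\mu_m)-\Jg(\tilde{\mu}_L)\le\epsilon_L$, i.e. $|b_{pq}(\mu_m)|^2\le \epsilon_L/(\sqrt3\,a_{pq})$. Restricting $(p,q)$ to the box $|p|\le 2L/3$, $|q|\le L$ and inserting the lower bound on the coefficients supplied by \eqref{eq:17} — namely $a_{pq}\ge C_0\,a_{2L/3,L}$ for every $(p,q)$ in this box — yields the uniform estimate $\max_{|p|\le 2L/3,\,|q|\le L}|b_{pq}(\mu_m)|^2\le \epsilon_L/(\sqrt3\,C_0\,a_{2L/3,L})$.

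To convert this into exponential decay I would compare $\epsilon_L$ with $\tilde{\epsilon}_L$: since $p^2+q^2\ge1$ off the origin and every summand is nonnegative, $\epsilon_L\le \tilde{\epsilon}_L/L^2$, whence $\max_{\mathrm{box}}|b_{pq}(\mu_m)|^2\le \tilde{\epsilon}_L/(\sqrt3\,C_0\,L^2\,a_{2L/3,L})$. Multiplying by $e^{2\lambda L}$, taking square roots, and invoking \eqref{eq:16} — which forces $e^{2\lambda L}\tilde{\epsilon}_L/a_{2L/3,L}\to0$ whenever $2\lambda$ lies below the threshold $\lambda_0$ of \eqref{eq:16}, while the factor $1/L^2$ stays bounded — produces \eqref{eq:42} with the admissible range of $\lambda$ being $(0,\lambda_0/2)$.

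The step requiring the most care is the passage from the single-term inequality $a_{pq}|b_{pq}|^2\le\epsilon_L$ to a bound uniform over the entire box: what is needed is a lower bound on \emph{every} coefficient $a_{pq}$ in the box, not merely on a single distinguished one, which is precisely the role I would assign to hypothesis \eqref{eq:17} (read as controlling the coefficients in the box from below by $C_0\,a_{2L/3,L}$). A minor technical matter is that $a_{2L/3,L}$ is only defined when $2L/3\in\Z$, so I would either restrict to $L\in 3\Z$ or replace $2L/3$ throughout by $\lfloor 2L/3\rfloor$, which does not affect any of the estimates above.
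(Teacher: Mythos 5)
Your proposal follows essentially the same route as the paper's own proof: bound the energy gap of the minimizer by $\epsilon_L$ using the triangular lattice of \rlemma{lem:lattice} as a test configuration, extract the single-term bound $\sqrt{3}\,a_{pq}|b_{pq}|^2\le\epsilon_L$ from \eqref{eq:41} via positivity of the coefficients, and then conclude with \eqref{eq:16} and \eqref{eq:17}. The extra care you take is warranted and matches the paper's intent: \eqref{eq:17} must indeed be read as a lower bound on \emph{every} coefficient in the box (a $\min$, despite the $\max$ as printed), and the comparison $\epsilon_L\le\tilde{\epsilon}_L/L^2$ together with halving the admissible exponent ($\lambda<\lambda_0/2$) are precisely the steps the paper leaves implicit in its final sentence.
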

\begin{proof}
  By the definition of $\epsilon_L$ it holds that
  \begin{equation}
    \label{eq:43}
\Jg(\mu_m)-\Jg(\tilde{\mu}_L)\leq \epsilon_L
  \end{equation}
Consequently, by \eqref{eq:41} (which holds for any $\mu$ satisfying
\eqref{eq:6}),
\begin{displaymath}
  \sqrt{3}\sum_{(p,q)\in\Z^2\setminus(0,0)}
  a_{pq}|b_{pq}|^2 \leq \epsilon_L\,.
\end{displaymath}
Consequently,  whenever
$|p|^\leq2L/3$ and $|q|\leq L$ it holds that there exists $C>0$ such that
\begin{displaymath}
  |b_{pq}|\leq C\Big[\frac{\epsilon_L}{a_{pq}}\Big]^{1/2}\,.
\end{displaymath}
By \eqref{eq:16} and \eqref{eq:17} we may now conclude \eqref{eq:42}. 
\end{proof}

Let
\begin{displaymath}
  S=\{ (p,q)\in\Z^2\setminus\{(0,0)\}\,|\,\min(|p|,|q|)\leq L-1\cap\max (|p|,|q|)\leq L\} 
\end{displaymath}
We note that by \eqref{eq:42} it holds that for any $\lambda<\lambda_0$ there
exists $L_0>0$ such that for all $L>L_0$ it holds that
\begin{equation}
\label{eq:44}
  (p,q)\in S\Rightarrow|b_{pq}|\leq e^{-\lambda L} \,.
\end{equation}

Unfortunately, \eqref{eq:44} does not consist of enough independent
inequalities necessary to find the $4L^2$ unknowns
$\{(x_n,y_n)\}_{n=1}^{2L^2}$ by the technique which follows.  We shall
therefore consider here minimization of \eqref{eq:9} with respect to
triplets.

\subsection{Minimization with respect to triplets}
\label{sec:triplets}

In the following we attempt to minimize $\Jg$ over the set of
configurations $\{(x_n,y_n)\}_{n=1}^{2L^2}$ which obey \eqref{eq:15},
which we repeat here for the convenience of the reader
\begin{align*}
  & (x_{n+2L^2/3},y_{n+2L^2/3})= \Big(x_n+\frac{\sqrt{3}}{2L},y_n+\frac{1}{2L}\Big) \quad
   n\in\Big[1,\frac{2L^2}{3}\Big]\cap\Z \\
  & (x_{n+4L^2/3},y_{n+4L^2/3})= \Big(x_n+\frac{\sqrt{3}}{L},y_n\Big) 
\end{align*}

Let 
\begin{equation}
  \label{eq:45}
S_1=\{ (p,q)\in\Z^2\setminus\{(0,0)\}\,|\,\max (3|p|/2,|q|)\leq L\} 
\end{equation}
Let further
\begin{displaymath}
  z_n=e(x_n/\sqrt{3}) \quad ; \quad w_n=e(y_n) \,. 
\end{displaymath}
Under \eqref{eq:15} we have
\begin{displaymath}
  b_{mn}=\sum_{k=1}^{2L^2}z_k^mw_k^n=
  \Big[1+e\Big(\frac{m}{L}\Big)+e\Big(\frac{m+n}{2L}\Big]
  \sum_{k=1}^{\frac{2L^2}{3}}z_k^mw_k^n\,.  
\end{displaymath}
By \eqref{eq:44} there exists $\lambda_0$ such that for all $\lambda<\lambda_0$ and
sufficiently large $L$
\begin{equation}
\label{eq:46}
  \Big|\sum_{k=1}^{\frac{2L^2}{3}}z_k^mw_k^n\Big|\leq e^{-\lambda L}\,,\;
  \forall(m,n)\in S \,,
\end{equation}
as long as
\begin{equation}
  \label{eq:47}
1+e\Big(\frac{m}{L}\Big)+e\Big(\frac{m+n}{2L}\Big)\neq0
\end{equation}
taking into account that otherwise
\begin{displaymath}
\Big|1+e\Big(\frac{m}{L}\Big)+e\Big(\frac{m+n}{2L}\Big)\Big|\geq
  \frac{1}{L} \,.
\end{displaymath}
Note that \eqref{eq:47} is satisfied for all $(m,n)\in S_1$ unless
\begin{displaymath}
    \frac{m}{L}=\pm\frac{1}{3} \text{ and } \frac{m+n}{2L}=\mp\frac{1}{3}\,.
\end{displaymath}
Consequently, \eqref{eq:15} leads to the solvable system
\begin{equation}
\label{eq:48}
  \Big|\sum_{k=1}^{\frac{2L^2}{3}}z_k^mw_k^n\Big|\leq e^{-\lambda L}\,,\; \forall(m,n)\in S_1\setminus\{(\pm
  2L/3,0),(\pm L/3,\pm L)\}\,.
\end{equation}
Above and in the sequel the inequality $A_L\leq e^{-\lambda L}$ means that there
exist $\lambda_0>0$ (not necessarily the same for each inequality of this
form) such that for every $0<\lambda<\lambda_0$ it holds that
\begin{displaymath}
  \lim_{L\to\infty}e^{\lambda L}A_L=0\,.
\end{displaymath}

We continue by obtaining the solutions of \eqref{eq:48}
\begin{lemma}
Let $\{(z_k,w_k)\}_{k=1}^{\frac{2L^2}{3}}\subset[{\mathbb S}^1]^2$ satisfy
\eqref{eq:48} for some $\delta>0$ and $L>1$.
Then, there exist $\lambda_0>0$, $(\alpha,\beta)\in\R^2$ such that 
  \begin{equation}
\label{eq:49}
  \lim_{L\to\infty}e^{\lambda L}\big|(z_k^{L/3},w_k^L)-(-1)^{k-1}(e^{i\alpha},e^{i\beta})\big|=0\,,\;\forall k\in[1,L^2]\cap\Z\,,
\end{equation}
for all $0<\lambda<\lambda_0$. 
\end{lemma}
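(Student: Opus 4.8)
The plan is to transcribe the one–dimensional mechanism (Newton's identities together with Rouch\'e's theorem) into the present setting, the essential new feature being that the ``zeros'' now live on the two–torus rather than on the circle. Write $M=2L^2/3$, $Z_k=z_k^{L/3}$ and $W_k=w_k^{L}$; since $3\mid L$ both are well defined points of ${\mathbb S}^1$, and the target \eqref{eq:49} says precisely that the empirical measure $\tfrac1M\sum_k\delta_{(Z_k,W_k)}$ collapses onto the two–point set $\pm(e^{i\alpha},e^{i\beta})$, its two atoms being visited alternately as $k$ increases. First I would read off $(\alpha,\beta)$ from the two sums that \eqref{eq:48} does \emph{not} control: the frequencies excluded there are exactly the nonzero points, lying in $S_1$, of the rank–two lattice $\Lambda^{*}=\Z(2L/3,0)+\Z(L/3,L)$, and for a genuine translate $p=(\xi_p,\eta_p)$ of the orbit of the finite group $G=\Lambda^{*\perp}/\Z^2$ (whose order is exactly $M$) one computes $\sum_k z_k^{2L/3}=M\,e(2L\xi_p/3)$ and $\sum_k z_k^{L/3}w_k^{L}=M\,e(L\xi_p/3+L\eta_p)$, whose arguments define $e^{2i\alpha}$ and $e^{i(\alpha+\beta)}$.

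The heart of the matter is a rigidity statement: over the whole two–dimensional box $S_1$ the number of conjugate–independent constraints in \eqref{eq:48} matches the number $2M$ of real unknowns $\{(\xi_k,\eta_k)\}$, where $z_k=e(\xi_k)$, $w_k=e(\eta_k)$, so an \emph{exponentially} small right–hand side should pin the configuration to a single $G$–orbit. To extract this I would, as in the one–dimensional Lemma, pass to generating functions: for each fixed $n$ with $|n|\le L$ set $R_n(t)=\sum_k w_k^{\,n}/(t-z_k)=\sum_{m\ge0}\sigma_{m,n}t^{-m-1}$, so that \eqref{eq:48} makes the truncation $\sum_{0\le m<2L/3}\sigma_{m,n}t^{-m-1}$ uniformly small on a circle $|t|=1+\rho$ with $\rho\sim1/L$ for every non–exceptional $n$. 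Feeding these bounds through Newton's identities controls the elementary symmetric functions of $\{z_k\}$ weighted by $w_k^{\,n}$, and a Rouch\'e comparison against the reference polynomial $(t^{2L/3}-e^{2i\alpha})^{L}$ localizes each $z_k^{2L/3}$ within $e^{-\lambda L}$ of $e^{2i\alpha}$; running the same scheme at the exceptional frequencies $(L/3,\pm L)$ simultaneously pins $w_k^{L}$ and couples its sign to that of $Z_k$, which is the source of the common factor $(-1)^{k-1}$ in \eqref{eq:49}.

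The main obstacle is twofold. First, the exceptional set $\{(\pm2L/3,0),(\pm L/3,\pm L)\}$ removed in \eqref{eq:48} is exactly the set of frequencies carrying the $O(M)$ mass, so the Newton recursion must be closed \emph{around} these gaps: the $z$–marginal alone is insufficient, since equidistribution of $\{z_k\}$ is compatible with all sums $\sigma_{m,0}$, $1\le m<2L/3$, being small, and it is genuinely the mixed sums with $|n|$ up to $L$ that destroy this degeneracy. Second, upgrading the resulting moment control to the \emph{pointwise} exponential bound \eqref{eq:49} for all $M\sim L^2$ points, starting from power sums of order only $\sim L$ in each direction, is what forces the use of the full box $S_1$ rather than any one–dimensional slice; this is where the hypotheses enter, with \eqref{eq:44} supplying $e^{-\lambda L}$ smallness uniformly over $S$, and the decay and nondegeneracy conditions \eqref{eq:16}–\eqref{eq:17} ensuring that the reference polynomial dominates the error on $|t|=1+\rho$, so that Rouch\'e's theorem applies and delivers one simple zero near each lattice site.
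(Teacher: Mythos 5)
Your structural reading is correct (the frequencies excluded from \eqref{eq:48} are exactly the nonzero points of $\Z(2L/3,0)+\Z(L/3,L)$ inside $S_1$, and the conclusion is that the empirical measure collapses onto one coset of the dual group, visited alternately), but the engine you propose --- Newton's identities plus Rouch\'e transplanted from the 1D lemma --- fails here for reasons that cannot be patched. The 1D proof works because the controlled data $b_1,\dots,b_{N/2}$ together with $e_k=\bar e_{N-k}$ determine \emph{all} coefficients of $\prod_k(t-z_k)$: the number of controlled power sums matches the degree. Here any univariate reduction is doomed by counting: power sums of $\zeta_k=z_k^aw_k^b$ correspond to frequencies on the line $\Z(a,b)$, and $S_1$ contains only $O(L)$ points of any such line, while the relevant polynomials have degree $M=2L^2/3$. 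Worse, for the variables you actually propose to localize, $Z_k=z_k^{2L/3}$, \emph{no} power sum is controlled at all: $\sum_k Z_k^j=\sigma_{2Lj/3,0}$ lies on the excluded lattice for $j=1$ and outside $S_1$ for $j\ge2$. The mixed sums cannot be funneled in through ``Newton's identities for weighted power sums,'' because no such identities exist: $R_n(t)=\sum_k w_k^n/(t-z_k)$ is not the logarithmic derivative of any polynomial, and smallness of its first $2L/3$ Laurent coefficients says nothing about its poles --- on $|t|=1+1/L$ the uncontrolled tail $\sum_{m\ge 2L/3}\sigma_{m,n}t^{-m-1}$ can be of size $O(ML)$, swamping the truncation. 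There is also an independent quantitative obstruction: your reference polynomial $(t^{2L/3}-e^{2i\alpha})^L$ has roots of multiplicity $L$, so on circles of radius $\rho$ around them it is only of size $\sim(\rho L)^L$; to localize points within $\rho=e^{-\lambda L}$, Rouch\'e would require coefficient errors below $L^Le^{-\lambda L^2}$, while the hypothesis supplies only $e^{-\lambda L}$. Finally, your determination of $(\alpha,\beta)$ presupposes that the two uncontrolled sums have modulus close to $M$ (``for a genuine translate of the orbit\dots''); that near-unimodularity is itself the main rigidity statement to be proved, and \eqref{eq:16}--\eqref{eq:17} cannot supply it inside this lemma --- they are hypotheses on the Fourier coefficients of $f$, used only to derive \eqref{eq:48} and in the final energy comparison, whereas the present lemma is a pure exponential-sum rigidity statement.

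The paper's actual proof is linear-algebraic and avoids symmetric functions entirely. Setting $v_{n,m}=\sqrt{3/(2L^2)}\,(z_k^nw_k^m)_k$, hypothesis \eqref{eq:48} says precisely that the Gram matrix of $\{v_{n,m}\}$ over the index box $([-L/3+1,L/3]\cap\Z)\times([-L/2+1,L/2]\cap\Z)$ is within $e^{-\lambda L}$ of the identity, so these vectors form a nearly orthonormal basis of $\C^{2L^2/3}$. Expanding the out-of-range vectors $v_{-L/3,n}$ and $v_{n,-L/2}$ in this basis, all coefficients are exponentially small except the single one whose frequency shift is an excluded lattice vector; reading this off componentwise forces $z_k^{2L/3}$ and $z_k^{L/3}w_k^L$ to be constant in $k$ up to $e^{-\lambda L}$, and the near-unimodularity of the corresponding averages (your missing step) is extracted by an extremal argument on the most separated pair of phases \eqref{eq:53}--\eqref{eq:54}. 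The gauge rotation \eqref{eq:57} then normalizes $\alpha=\beta=0$, and the controlled sum $\sum_k w_k^L\approx0$ forces half the $w_k^L$ near $+1$ and half near $-1$, giving the alternating factor $(-1)^{k-1}$ after reindexing. If you rebuild your argument, keep your description of the limit configuration, but replace the Newton--Rouch\'e mechanism by this Gram-matrix rigidity; the former is intrinsically one-dimensional and cannot exploit the two-dimensional box of controlled frequencies.
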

\begin{proof}
Let
  \begin{equation}
\label{eq:50}
    v_{n,m}=\sqrt{\frac{3}{2L^2}}\big[z_1^nw_1^m,z_2^nw_2^m,
    \ldots,z_{L^2}^nw_{L^2}^m\big]^T\,,\;(n,m)\in\Z^2 \,.
  \end{equation}
Let further
\begin{displaymath}
  \langle a,b\rangle=\sum_{k=1}^M a_k\bar{b}_k \,, \; a,b\in\C^M
\end{displaymath}
denote the standard inner product  in $\C^M$. Let $S_{2i}\in\Z$,
$i=1,2$ be given by 
\begin{displaymath}
  S_{21}=\{[-L/3+1,L/3]\cap\Z\}\quad ;\quad S_2=\{[-L/2+1,L/2]\cap\Z\}
\end{displaymath}

Let $\Mg\in\C^{2L^2/3\times2L^2/3}$ denote the matrix whose elements are
given by
\begin{displaymath}
  \Mg_{m+L/3+L(n+L/2),p+L/3+L(q+L/2)}=\langle v_{n,m},v_{p,q}\rangle\,\;
  (n,p,m,q)\in S_{21}^2\times S_{22}^2\,.
\end{displaymath}
Since
\begin{equation}
\label{eq:51}
   \Big|\langle v_{n,m},v_{p,q}\rangle-\delta_{np}\delta_{mq}\Big|\leq
   Ce^{-\lambda L}\,,\;(n,p,m,q)\in S_{21}^2\times S_{22}^2 \,,
\end{equation}
where $\delta_{ij}$ denotes Kronecker delta, we may conclude that $\Mg$ is
diagonally dominant, and hence also that
$\Union_{n=-L/3+1}^{L/3}\Union_{m=-L/2+1}^{L/2}v_{n,m}$ forms a basis of
$\C^{2L^2/3}$ (a nearly orthonormal one).  
Furthermore, as
\begin{displaymath}
    \Big|\langle v_{-L/3,n},v_{p,q}\rangle-\frac{3}{2L^2}\sum_{k=1}^{\frac{2L^2}{3}} z_k^{-L/3}\,
    \delta_{p,L/3}\delta_{nq}\Big|\leq e^{-\lambda L}
    \,,\;(p,n,q)\in S_{21}\times S_{22}^2  \,, 
\end{displaymath}
we may conclude that 
\begin{equation}
\label{eq:52}
  \Big|v_{-\frac{L}{3},n}-\frac{3}{2L^2}\sum_{k=1}^{\frac{2L^2}{3}}
  z_k^{-L/3}v_{\frac{L}{3},n}\Big|\leq e^{-\lambda L}\,,\; \forall n\in S_{22}\,.
\end{equation}
It follows for the $k$'th component of \eqref{eq:52} that
\begin{displaymath}
    \Big|z_k^{-L/3}w_k^n-\frac{2}{3L^2}\sum_{j=1}^{\frac{2L^2}{3}}z_j^{-\frac{2L}{3}} z_k^{L/3}w_k^n\Big|\leq e^{-\lambda L}\,,
\end{displaymath}
and hence $z_k^{2L/3}$ is independent of $k$, up to an exponentially
small error. Consequently, there exists
$\alpha\in\R$ such that 
\begin{displaymath}
  \Big|e^{-i\alpha}-\frac{2}{3L^2}\sum_{j=1}^{\frac{2L^2}{3}}z_j^{-\frac{2L}{3}}\Big|\leq e^{-\lambda L} \,.
\end{displaymath}
from which we readily conclude that
\begin{equation}
\label{eq:53}
   1-e^{-\lambda L}\leq\Big|\frac{2}{3L^2}\sum_{j=1}^{\frac{2L^2}{3}}z_j^{-\frac{2L}{3}}\Big|
\end{equation}
Let $e^{i\alpha_j}=z_j^{-\frac{2L}{3}}$. Without any loss of generality we may assume that
\begin{displaymath}
  \max_{j\neq k}|\alpha_j-\alpha_k|=|\alpha_1-\alpha_2| \,.
\end{displaymath}
We now obtain the following upper bound
\begin{displaymath}
  \Big|\frac{2}{3L^2}\sum_{j=1}^{\frac{2L^2}{3}}z_j^{-\frac{2L}{3}}\Big|\leq
  \Big|\frac{2}{3L^2}\sum_{j=3}^{\frac{2L^2}{3}}z_j^{-\frac{2L}{3}}\Big|+\frac{2}{3L^2}|e^{i\alpha_1}+e^{i\alpha_2}|
  \leq  1-\frac{4}{3L^2} +\frac{2}{3L^2}|1+e^{i(\alpha_2-\alpha_1)}|\,,
\end{displaymath}
which together with the lower bound \eqref{eq:53} yields 
\begin{displaymath}
  2-e^{-\lambda L}\leq |1+e^{i(\alpha_2-\alpha_1)}|\,.
\end{displaymath}
It easily follows that there exists $\lambda_0>0$ such that for any $\lambda<\lambda_0$
it holds, for sufficiently large $L$, that
\begin{displaymath}
    \max_{j\neq k}|\alpha_j-\alpha_k|\leq e^{-\lambda L} \,.
\end{displaymath}
Consequently,
\begin{equation}
\label{eq:54}
  |z_k^{2L/3}-e^{i\alpha}|\leq e^{-\lambda L}\,,\;\forall k\in[1,2L^2/3]\cap\Z \,.
\end{equation}

To obtain $w_k^L$ we observe that 
\begin{equation}
\label{eq:55}
   \Big| \langle v_{n,-L/2},v_{p,q}\rangle-\frac{3}{2L^2}\sum_{k=1}^{\frac{2L^2}{3}}
    w_k^{-L}z_k^{\pm L/3}\delta_{q,L/2}\delta_{n,p\pm L/3}\Big|\leq e^{-\lambda L}
    \,,\;(n,p,q)\in S_{21}^2\times S_{22}  \,, 
\end{equation}
where the sign of $\pm$ is determined from the requirement $p\pm
L/3\in S_{21}$.  Suppose that $n\in[-L/3+1,0]\cap\Z$. Then we
obtain for the $k$'th component of \eqref{eq:55} that
\begin{displaymath}
   \Big|z_k^{n}w_k^{-L/2}- \frac{2}{3L^2}\sum_{j=1}^{\frac{2L^2}{3}}
   z_j^{-\frac{2L}{3}} z_k^{n+L/3}w_k^{L/2}\Big|\leq e^{-\lambda L}\,, 
\end{displaymath}
from which we conclude that there exists $\beta\in\R$ such that
\begin{displaymath}
  \Big|\frac{2}{3L^2}\sum_{j=1}^{\frac{2L^2}{3}}z_j^{\frac{L}{3}}w_j^L
  -e^{i\beta}\Big|\leq e^{-\lambda L}
\end{displaymath}
Hence, as in the proof of \eqref{eq:54} we obtain that there exists
$\lambda_0>0$ such that for all  $\lambda<\lambda_0$ it holds, for sufficiently large $L$
\begin{equation}
\label{eq:56}
   \big|w_k^L-e^{i\beta}z_k^{-L/3}\big|\leq e^{-\lambda L} \,,\;\forall k\in[1,2L^2/3]\cap\Z \,.
\end{equation}

Using the transformation 
\begin{equation}
\label{eq:57}
  \{z_k\}_{k=1}^{2L^2/3}
\to\{e^{i\alpha/L}z_k\}_{k=1}^{2L^2/3}\,,, 
\end{equation}
we may set $z_1=1$, yielding by \eqref{eq:54} that $|z_k^{2L/3}-1|\leq
e^{-\lambda L}$. Similarly we can set $w_1=1$, yielding $\beta=0$ in
\eqref{eq:56}. It follows that $|w_k^{2L}-1|\leq e^{-\lambda L}$. Since
\begin{displaymath}
 \Big|\frac{3}{2L^2}\sum_{k=1}^{2L^2/3} w_k^L\Big|\leq e^{-\lambda L} 
\end{displaymath}
we may conclude (by rearranging the indices) that
\begin{equation}
\label{eq:58}
  \big|w_k^L-(-1)^{k-1}\big| \leq e^{-\lambda L}\,, 
\end{equation}
Furthermore, by \eqref{eq:56} with $\beta=0$ and \eqref{eq:58}it holds that
\begin{displaymath}
  \big|z_k^{L/3}-(-1)^{k-1}\big| \leq e^{-\lambda L}\,, 
\end{displaymath}
which together with \eqref{eq:58} establishes \eqref{eq:49}.
\end{proof}
In the sequel may assume, without any loss of generality (by using  \eqref{eq:57}),  that
\eqref{eq:49} holds with $\alpha=\beta=0$.

Since the number of distinct solutions in $\C^2$ to the equations
  \begin{equation}
\label{eq:59}
  (z^{L/3},w^L)=(1,1) \quad \text{and} \quad   (z^{L/3},w^L)=(-1,-1) 
\end{equation}
is precisely $2L^2/3$ it follows from \eqref{eq:49} that either there
is exactly one point in $\{(z_k,w_k)\}_{k=1}^{\frac{2L^2}{3}}$ inside a
ball of radius $e^{-\lambda L}$ near each solution of \eqref{eq:59} or that
at least two points in $\{(z_k,w_k)\}_{k=1}^{\frac{2L^2}{3}}$ are
$\OO(e^{-\lambda L})$ distant from each other.  Consequntly, a proper lower
bound on $\min_{k_1\neq k_2}|(z_{k_1},w_{k_1})-(z_{k_2},w_{k_2})|$ will
allow us to obtain that there exists $\lambda_0>0$ such that for all
$0<\lambda<\lambda_0$ and any $m\in[1,2L^2/3]\cap\Z$ it holds that
\begin{equation}
  \label{eq:60}
(z_m,w_m)\in B\big((e^{3\pi ij/L},e^{2\pi il/L}e^{\frac{\pi
  i[1+(-1)^j]}{2L}}),e^{-\lambda L}\big)\,,
\end{equation}
for all $L>L_0(\lambda)$, where $j=\lfloor3m/2L\rfloor$, and $l=m-L\lfloor m/L\rfloor$. 

Such a lower bound is established in the following lemma.
\begin{lemma}
\label{lem:noduplicity}
  Let $\{(z_k,w_k)\}_{k=1}^{2L^2/3}$ denote a solution of
  \eqref{eq:48}. Let further \break $(k_1,k_2)\in([1,2L^2/3]\cap\Z)^2$ satisfy
  $k_1\neq k_2$. Then there exists positive $C$ and $L_0$ such that for
  all $L>L_0$ it holds that
  \begin{equation}
    \label{eq:61}
 \big|(z_{k_1},w_{k_1})-(z_{k_2},w_{k_2})\big|\geq\frac{C}{L}\,..
  \end{equation}
\end{lemma}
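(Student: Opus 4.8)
The plan is to convert the near-orthonormality already recorded in \eqref{eq:51} into pointwise control of the associated reproducing kernel, which factors as a product of two Dirichlet kernels whose concentration near the diagonal forces the points apart at the scale $1/L$. Assemble the $2L^2/3$ vectors $v_{n,m}$, $(n,m)\in S_{21}\times S_{22}$, defined in \eqref{eq:50}, as the columns of a matrix $V\in\C^{2L^2/3\times 2L^2/3}$, and let $V^*$ denote its conjugate transpose. The previous lemma already guarantees that these vectors form a (nearly orthonormal) basis of $\C^{2L^2/3}$, so $V$ is invertible, and a direct computation gives the entries of $VV^*$ explicitly,
\begin{displaymath}
(VV^*)_{k k'}=\frac{3}{2L^2}\Big(\sum_{n\in S_{21}}(z_k\bar z_{k'})^n\Big)\Big(\sum_{m\in S_{22}}(w_k\bar w_{k'})^m\Big),
\end{displaymath}
a product of two finite geometric sums in the phase differences $z_k\bar z_{k'}$ and $w_k\bar w_{k'}$.

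Next I would bound the off-diagonal entries of $VV^*$. Writing $V^*V=I+E$, the Gram entries are exactly those estimated in \eqref{eq:51}, so every entry of $E$ is $\OO(e^{-\lambda L})$ and hence $\|E\|\le\|E\|_F=\OO(L^2e^{-\lambda L})$ (there are $(2L^2/3)^2$ entries). Since $VV^*$ and $V^*V$ share the same spectrum, the Hermitian matrix $VV^*-I$ has the same operator norm as $E$, whence each of its entries is $\OO(L^2e^{-\lambda L})$. Thus, for $k_1\neq k_2$, writing $z_{k_1}\bar z_{k_2}=e^{i\theta}$ and $w_{k_1}\bar w_{k_2}=e^{i\phi}$ with $\theta,\phi$ the principal arguments, the geometric sums above evaluate to Dirichlet kernels and give
\begin{displaymath}
|(VV^*)_{k_1k_2}|=\frac{3}{2L^2}\,\Big|\frac{\sin(L\theta/3)}{\sin(\theta/2)}\Big|\,\Big|\frac{\sin(L\phi/2)}{\sin(\phi/2)}\Big|\le CL^2e^{-\lambda L},
\end{displaymath}
which tends to $0$ as $L\to\infty$.

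The separation now follows by contradiction. If $|\theta|\le 3\pi/(2L)$ and $|\phi|\le\pi/L$ held simultaneously, then $|L\theta/3|\le\pi/2$ and $|L\phi/2|\le\pi/2$, so the elementary bounds $\sin t\ge(2/\pi)t$ on $[0,\pi/2]$ and $|\sin(s/2)|\le|s|/2$ give $|\sin(L\theta/3)/\sin(\theta/2)|\ge 4L/(3\pi)$ and $|\sin(L\phi/2)/\sin(\phi/2)|\ge 2L/\pi$, hence $|(VV^*)_{k_1k_2}|\ge 4/\pi^2$, contradicting the preceding display for large $L$. Therefore either $|\theta|>3\pi/(2L)$ or $|\phi|>\pi/L$. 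Since $|z_{k_1}-z_{k_2}|=2|\sin(\theta/2)|$ and $|w_{k_1}-w_{k_2}|=2|\sin(\phi/2)|$, the same bound $\sin t\ge(2/\pi)t$ turns either alternative into a coordinate gap of size at least $C/L$, and the Euclidean distance $|(z_{k_1},w_{k_1})-(z_{k_2},w_{k_2})|$ dominates this gap, establishing \eqref{eq:61}.

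The conceptual crux, and the step I expect to require the most care, is the passage from \eqref{eq:51}---which concerns the \emph{Gram} matrix $V^*V$ of the $v_{n,m}$---to control of the \emph{reproducing kernel} $VV^*$; this is exactly where the fact that the $v_{n,m}$ form a complete (nearly orthonormal) basis, rather than a mere orthonormal system, is indispensable, and it is what makes the Dirichlet-kernel identity available. Everything after that is routine trigonometry, the only genuine choice being that of the thresholds $3\pi/(2L)$ and $\pi/L$ as safe halves of the first zeros $3\pi/L$ and $2\pi/L$ of the two Dirichlet kernels.
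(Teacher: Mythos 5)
Your proof is correct, but it follows a genuinely different route from the paper's. The paper argues via an Erd\H{o}s--Tur\'an--Koksma-type discrepancy bound: using the Selberg/Vaaler-type trigonometric majorants $v_{\delta,a}^M$ of \cite[Section 5.2]{ha98}, it majorizes the number $L_R$ of configuration points in a $\delta\times\delta$ box by a product of factors controlled by the Weyl sums in \eqref{eq:48}, then picks $M_z=2L/3-1$, $M_w=L-1$ and $\delta L<\sqrt{2}-1$ to force $L_R<2$, i.e.\ at most one point per box of side $\sim 1/L$. You instead exploit the fact that the system is square: the number of frequency vectors $v_{n,m}$, $(n,m)\in S_{21}\times S_{22}$, equals the dimension $2L^2/3$, so the Gram bound \eqref{eq:51} (i.e.\ $V^*V=I+E$ with $\|E\|\le\|E\|_F=\OO(L^2e^{-\lambda L})$) transfers---through the equality of the spectra of $V^*V$ and $VV^*$ and Hermiticity---to an operator-norm, hence entrywise, bound on $VV^*-I$; the off-diagonal entries of $VV^*$ are explicit products of two Dirichlet kernels, and their concentration rules out $|\theta|\le 3\pi/(2L)$ and $|\phi|\le\pi/L$ holding simultaneously. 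I checked the key steps---the factorization of $(VV^*)_{k_1k_2}$, the lower bound $4/\pi^2$ under the two smallness assumptions (with the ratio read as its limit $2L/3$ when $\theta=0$), and the conversion of either alternative into a gap of at least $2/L$ via $|z_{k_1}-z_{k_2}|=2|\sin(\theta/2)|$---and they are sound; you also correctly identify completeness (squareness of $V$) as the indispensable ingredient, since for a rectangular system $VV^*-I$ would have operator norm near $1$ and the duality step would collapse (invertibility of $V$, however, is never actually needed: square matrices $VV^*$ and $V^*V$ always share their spectrum). As for what each approach buys: yours is elementary and self-contained, avoiding the majorant machinery of \cite{ha98} and yielding explicit constants, but it is tied to the triplet reduction, where the controlled frequencies exactly match the number of points; the paper's counting argument needs only Weyl-sum control over a rectangle of frequencies, with no cardinality matching, and is therefore the more robust tool---this is precisely the obstruction in Section~\ref{sec:upper}, where, as the paper notes, there are too few controlled frequencies, so your duality identity would be unavailable there while box counting would still make sense.
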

\begin{proof}
  The proof is based on the same arguments the
  Erd\H{o}s-Tur\'an-Koksma \cite[Theorem 5.2.1]{ha98} is based on . Thus,
  we set
  \begin{displaymath}
    (z_k,w_k)=(e^{2\pi ix_k},e^{2\pi iy_k})\,,
  \end{displaymath}
  where $(x_k,y_k)\in[0,1]^2$. Then, for some $a\in[0,1)$ and
  $0<\delta<1-a$, we let $\chi_{\delta,a}$ denote the characteristic function of
  $[a-\delta/2,a+\delta/2]$. We further set
\begin{displaymath}
  \chi^*_{\delta,a}(x)=\sum_{n\in\Z}\chi_{\delta,a}(x+n)
\end{displaymath}
It is well known  \cite[Section 5.2]{ha98} that for any $M\geq1$ there
exists an entire function $v_{\delta,a}^M$ satisfying
\begin{alignat*}{2}
  & \chi^*_{\delta,a}(x)\leq v_{\delta,a}^M(x)  & \forall x\in\R \\
  & v_{\delta,a}^M(x)=\sum_{n=-M}^M \hat{v}_n\exp \{2\pi inx\}
  & \forall x\in\R \\
  & \hat{v}_0=\delta+\frac{1}{M+1} \\
  &  |\hat{v}_n|\leq 2\Big(\delta +\frac{1}{M+1}\Big) & \forall1\leq|n|\leq M \,.
\end{alignat*}
Let $L_R$ denote the number of points $(x_n,y_n)$ associated with the
solution of \eqref{eq:48} in \linebreak
$R=[a-\delta/2,a+\delta/2]\times[b-\delta/2,b+\delta/2]$.  
As in \cite[Section 5.2]{ha98} we write
\begin{displaymath}
  L_R=\sum_{n=1}^{2L^2/3} \chi^*_{\delta,a}(x_n) \chi^*_{\delta,b}(y_n)\leq\sum_{n=1}^{2L^2/3} v_{\delta,a}^M(x_n) v_{\delta,b}^M(y_n) 
\end{displaymath}
It follows that
\begin{displaymath}
  L_R\leq \Big[\delta L+\frac{L}{M_w+1}\Big]\Big[\delta L+\frac{2L}{3(M_z+1)}\Big]\Big[1+2\sum_{
    \begin{subarray}\strut
      m=-M_z \\
      m\neq 0
    \end{subarray}}^{M_z}
  \Big|\sum_{k=1}^{2L^2/3}z_k^m \Big|\Big] \Big[1+2\sum_{
    \begin{subarray}\strut
      n=-M_w \\
      n\neq0
    \end{subarray}
  }^{M_w} \Big|\sum_{k=1}^{2L^2/3}w_k^n\Big|\Big]\,.
\end{displaymath}
Choosing $M_z=2L/3-1$ and $M_w=L-1$ yields, by \eqref{eq:48}
\begin{displaymath}
   L_R\leq \Big[\delta L+1\Big]^2(1+e^{-\lambda L})\,.
\end{displaymath}
For $\delta L<\sqrt{2}-1$ we obtain that $L_R<2$ for sufficiently large
$L$ proving therefore
that for $k_1\neq k_2$ it holds that 
\begin{displaymath}
  |(x_{k_1},y_{k_1})-(x_{k_2}y_{k_2})|\geq \frac{\sqrt{2}-1}{2L}\,.
\end{displaymath}
From the above, we may easily conclude \eqref{eq:61} 
and hence also \eqref{eq:60}\,.
\end{proof}

We continue as in Section \ref{sec:2}. 
For convenience, we use translation to obtain that
\begin{equation}
\label{eq:62}
\frac{\sqrt{3}}{2L^2}\sum_{n=1}^{2L^2/3} x_n=\frac{3}{2L^2}\sum_{n=1}^{2L^2/3} y_n=\frac{1}{2}\,,
\end{equation}
and set
\begin{subequations}
\label{eq:63}
  \begin{equation}
  x_{kj}^0=\frac{3\sqrt{3}(2k-1)}{4L}\quad ; \quad y_{kj}^0=\frac{4j+[1+(-1)^k]}{4L}
\end{equation}
and
  \begin{equation}
    \Tg_L= \Union_{k=1}^{2L/3}\Union_{j=1}^L( x_{kj}^0, y_{kj}^0)\,.
  \end{equation}
\end{subequations}
Let $\Sg_L=\Union_{k=1}^{2L/3}\Union_{j=1}^L\{(x_{kj},y_{kj})\}$ denote the minimizer of $\Jg_L$ satisfying
\eqref{eq:62}. We now set ${\bs \delta}_L=\Sg_L-\Tg_L$. Set further
\begin{equation}
  \label{eq:64}
{\bs \delta}_L=({\bs \delta}_L^x,{\bs \delta}_L^y)\,,
\end{equation}
where ${\bs \delta}_L^x\in\C^{2L^2/3}$ represents the difference in $x$ between $\Sg_L$
and $\Tg_L$ and ${\bs \delta}_N^y\in\C^{2L^2/3}$ similarly represents the difference in
$y$.

Let now  ${\bs
  V}_{mn}\in\C^{2L^2/3}$ be given, for $|m|\geq1$ by
\begin{displaymath}
 [{\bs V}_{mn}]_{(k-1)L+j} =
 \frac{\sqrt{3}}{2\pi im}\frac{\partial b_{mn}}{\partial x_{kj}}(\Tg_L)\,,\;(k,j)\in I_L \,.
\end{displaymath}
where
\begin{displaymath}
  I_L=[1,2L/3]\cap\Z\times[1,L]\cap\Z\,.
\end{displaymath}
Hence, 
\begin{equation}
\label{eq:65}
   [{\bs V}_{nm}]_{(k-1)L+j} =\frac{3}{2L^2}\exp
   \Big\{i\pi\frac{6mk+4nj+n[1+(-1)^k]}{2L}\Big\}\,.
 \end{equation}
Note that for $|n|\geq 1$
\begin{displaymath}
 [{\bs V}_{mn}]_{(k-1)L+j} =\frac{1}{2\pi in}
 \frac{\partial b_{mn}}{\partial y_{kj}}(\Tg_L)\,,\; (k,j)\in I_L \,.
\end{displaymath} 

We can now state
\begin{lemma}
 There exists $C>0$, independent of $L$, such that
 \begin{equation}
\label{eq:66}
\sum_{m=1}^{2L/3}\sum_{n=-L+1}^L|b_{mn}|^2 \geq CL^2\|{\bs \delta}_L\|_2^2 \,.
 \end{equation}
\end{lemma}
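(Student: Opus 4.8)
The plan is to adapt the three-step scheme behind the one-dimensional coercivity estimate \eqref{eq:29}, with the lattice vectors \eqref{eq:65} playing the role of the ${\bs V}_n$ there. I write ${\bs \delta}_L=({\bs \delta}_L^x,{\bs \delta}_L^y)$ as in \eqref{eq:64} and use throughout the bound $\|{\bs \delta}_L\|_\infty\le Ce^{-\lambda L}$ supplied by \eqref{eq:60}. \emph{Linearization.} For every $(m,n)$ in the range of \eqref{eq:66} one has $1\le m\le 2L/3$, so $m$ is not a multiple of $L$ and Lemma~\ref{lem:lattice} gives $b_{mn}(\Tg_L)=0$. Expanding $b_{mn}$ about $\Tg_L$ and comparing with \eqref{eq:65}, the zeroth order term drops out and
\begin{displaymath}
  b_{mn}=\kappa_L\Big(\frac{m}{\sqrt3}\langle {\bs V}_{mn},{\bs \delta}_L^x\rangle+n\langle {\bs V}_{mn},{\bs \delta}_L^y\rangle\Big)+R_{mn}\,,
\end{displaymath}
where $\kappa_L$ is the constant fixed by the normalization in \eqref{eq:65} and $R_{mn}=\OO\big(|\kappa_L|(m^2+n^2)\|{\bs \delta}_L\|_2^2\big)$ is the quadratic remainder.

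\emph{Remainder.} Exactly as in the step giving \eqref{eq:31}, the a priori bound $\|{\bs \delta}_L\|_2\le Ce^{-\lambda L}$ (up to polynomial factors) makes $\sum_{m,n}|R_{mn}|^2$ exponentially smaller than the first order contribution, so it may be absorbed. Hence \eqref{eq:66} reduces to a lower bound for the linearized map ${\bs \delta}_L\mapsto(b_{mn})$.

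\emph{Coercivity.} The decisive advantage over the general $v_{n,m}$ is that the Gram matrix of the lattice vectors \eqref{eq:65} is completely explicit. Evaluating the geometric sums, the family $B^\ast=\{{\bs V}_{mn}:1\le m\le 2L/3,\ -L+1\le n\le 0\}$ is orthogonal (the $j$-sum vanishes unless $n\equiv n'\pmod L$, the $k$-sum unless $m= m'$), hence after normalization a basis of $\C^{2L^2/3}$; moreover $V$ has period $2L/3$ in $m$ and $2L$ in $n$ and satisfies the exact antiparallel identity ${\bs V}_{m+L/3,\,n+L}=-{\bs V}_{m,n}$, which accounts for the twofold overcompleteness of the range. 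Consequently each direction ${\bs V}_{mn}\in B^\ast$ is probed by the pair $b_{m,n},\,b_{m+L/3,\,n+L}$, and, writing $\xi=\langle {\bs V}_{mn},{\bs \delta}_L^x\rangle$, $\eta=\langle {\bs V}_{mn},{\bs \delta}_L^y\rangle$,
\begin{displaymath}
  \begin{pmatrix} b_{m,n}\\ b_{m+L/3,\,n+L}\end{pmatrix}
  \approx\kappa_L\begin{pmatrix} m/\sqrt3 & n\\ -(m+L/3)/\sqrt3 & -(n+L)\end{pmatrix}
  \begin{pmatrix}\xi\\ \eta\end{pmatrix}\,.
\end{displaymath}
The weight matrix has determinant $-\kappa_L^2L(m-n/3)/\sqrt3$, and since $n\le 0<m$ we have $m-n/3\ge1$, so it is invertible for every direction; inverting, summing over $B^\ast$ and using Parseval recovers $\|{\bs \delta}_L^x\|_2^2+\|{\bs \delta}_L^y\|_2^2=\|{\bs \delta}_L\|_2^2$, yielding \eqref{eq:66}.

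The step I expect to be the main obstacle is the uniform control of these $2\times2$ inversions. Although the determinant never vanishes, the smallest singular value of the weight matrix is only of order $\kappa_L$ for the directions with small $m$ and $n$ close to $0$, while it is of order $\kappa_L L$ generically; these low-weight directions are precisely where the factor $L^2$ in \eqref{eq:66} is saturated, and they must be estimated carefully (also folding the partner index back into $[1,2L/3]$ by the $m$-periodicity). One must further check that the $\OO(1)$ resonant frequencies $(\pm 2L/3,0),(\pm L/3,\pm L)$ excluded in \eqref{eq:48} do not interfere. Once the bottleneck directions are handled, the three steps combine to prove \eqref{eq:66}.
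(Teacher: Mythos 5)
Your linearization and remainder steps are the same as the paper's (its Steps 2--3), and the structural facts you isolate --- orthogonality of the lattice vectors \eqref{eq:65} and the identity ${\bs V}_{m+L/3,\,n+L}=-{\bs V}_{m,n}$ --- are correct. The gap is precisely the ``main obstacle'' you flag at the end, and it cannot be repaired inside your scheme: if each basis direction ${\bs V}_{mn}$, $-L+1\le n\le 0$, is probed only by the two measurements $b_{m,n}$ and $b_{m+L/3,\,n+L}$, then for directions with $m+|n|=\OO(1)$ the $2\times2$ weight matrix genuinely has smallest singular value of order $\kappa_L$, not $\kappa_L L$: for $(m,n)=(1,0)$ the choice $\eta=-(1+L/3)\xi/(\sqrt3\,L)$ annihilates the second row while the first row returns only $\kappa_L\xi/\sqrt3$, and $|(\xi,\eta)|\sim|\xi|$. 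Treating $(\xi_{mn},\eta_{mn})$ as independent complex unknowns, the best per-direction bound at such frequencies is $|b_{m,n}|^2+|b_{m+L/3,\,n+L}|^2\gtrsim\kappa_L^2\big(|\xi|^2+|\eta|^2\big)$, and summing over directions yields only $\sum|b_{mn}|^2\gtrsim\kappa_L^2\|{\bs\delta}_L\|_2^2$, short of \eqref{eq:66} by exactly the factor $L^2$ whenever ${\bs\delta}_L$ concentrates on low directions. No sharper analysis of the $2\times2$ inversion can close this, because the offending complex vector $(\xi,\eta)$ really exists.

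What rules it out --- and this is the idea missing from your proposal --- is that ${\bs\delta}_L$ is \emph{real}. Reality forces $\alpha_x^{mn}=\alpha_x^{(2L/3-m),(L-n)}$ and $\alpha_y^{mn}=\alpha_y^{(2L/3-m),(L-n)}$ (the relation recorded after \eqref{eq:68}), so each degree of freedom is in fact probed by \emph{four} frequencies, $(m,n)$, $(2L/3-m,L-n)$, $(m,n-L)$ and $(2L/3-m,-n)$, not by your two. Summing the four squared measurements produces the quadratic form \eqref{eq:70} with $a=4(L/3-m)^2+\tfrac{4}{9}L^2$, $b=4(L/3-m)(L/2-n)$, $c=4(L/2-n)^2+L^2$, for which $ac-b^2=4(L/3-m)^2L^2+\tfrac{16}{9}L^2(L/2-n)^2+\tfrac{4}{9}L^4\gtrsim L^4$ while $a+c\lesssim L^2$; hence its smallest eigenvalue is $\gtrsim L^2$ \emph{uniformly} in $(m,n)$, including your bottleneck directions. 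Put differently, the $(\xi,\eta)$ that defeats your pair of measurements is incompatible, through the reality constraint, with simultaneously defeating the conjugate partner's pair. Your linearization and remainder steps can stand as written; the coercivity step must be run on conjugate pairs of directions (the paper's Step 1, \eqref{eq:69}--\eqref{eq:70}), not direction by direction.
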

\begin{proof}

{\em Step 1:}   Show that there exists $C>0$ such that for
sufficiently large $L$ it holds that
  \begin{equation}
\label{eq:67}
\sum_{m=1}^{2L/3}\sum_{n=-L+1}^L|\langle{\bs V}_{mn},{\bs \delta}_{mn}\rangle|^2\geq CL^2 \|{\bs \delta}_L\|_2^2 \,,
  \end{equation}
where 
\begin{displaymath}
  {\bs \delta}_{mn} =m{\bs \delta}_L^x+n{\bs \delta}_L^y\,.
\end{displaymath}

We first note that
$\Vg_L=\Union_{(m,n)\in I_L}{\bs V}_{mn}$ 
forms an orthonormal basis of $\C^{2L^2/3}$.
To prove the above statement we compute
\begin{equation}
\label{eq:68}
  \langle{\bs V}_{mn},{\bs V}_{pq}\rangle =\frac{9}{4L^4}\sum_{j=1}^{2L/3}\sum_{l=1}^L\exp
   \Big\{i\pi\frac{6(m-p)j+(n-q)(4l+[1+(-1)^j])}{2L}\Big\}=\delta_{mp}\delta_{nq}\,.
\end{equation}
Hence,
\begin{displaymath}
  {\bs \delta}_L^x=\sum_{(m,n)\in I_L}\alpha_x^{mn}{\bs V}_{mn} \quad;\quad {\bs
    \delta}_L^y=\sum_{(m,n)\in I_L}\alpha_y^{mn}{\bs V}_{mn}\,, 
\end{displaymath}
where
\begin{displaymath}
  \alpha_x^{mn}=\langle {\bs V}_{mn},{\bs \delta}_L^x\rangle \quad ;\quad   \alpha_y^{mn}=\langle {\bs V}_{mn},{\bs \delta}_L^y\rangle \,.
\end{displaymath}
Note that since ${\bs \delta}_L$ is real we must have
\begin{displaymath}
  \alpha_x^{mn}= \alpha_x^{(2L/3-m),(L-n)}\quad ;\quad \alpha_y^{mn}= \alpha_y^{(2L/3-m),(L-n)}\,.
\end{displaymath}
Hence,
\begin{subequations}
\label{eq:69}
  \begin{equation}
  \sum_{(m,n)\in I_L}|\langle{\bs V}_{mn},{\bs \delta}_{mn}\rangle|^2=\frac{1}{2}
\sum_{m=1}^{2L/3}\sum_{n=1}^L\Big[|m\alpha_x^{mn}+n\alpha_y^{mn}|^2+
\Big|\Big(\frac{2L}{3}-m\Big)\alpha_x^{mn}+(L-n)\alpha_y^{mn}\Big|^2\Big]
\end{equation}
Similarly, we obtain that
\begin{equation}
  \sum_{(m,-n+1)\in I_L}|\langle{\bs V}_{mn},{\bs \delta}_{mn}\rangle|^2=\frac{1}{2}
\sum_{m=1}^{2L/3}\sum_{n=1}^L\Big[|m\alpha_x^{mn}+(n-L)\alpha_y^{mn}|^2+
\Big|\Big(\frac{2L}{3}-m\Big)\alpha_x^{mn}-n\alpha_y^{mn}\Big|^2\Big]
\end{equation}
\end{subequations}

We now write
\begin{subequations}
\label{eq:70}
  \begin{multline}
  |m\alpha_x^{mn}+n\alpha_y^{mn}|^2+
\Big|\Big(\frac{2L}{3}-m\Big)\alpha_x^{mn}+(L-n)\alpha_y^{mn}\Big|^2+|m\alpha_x^{mn}+(n-L)\alpha_y^{mn}|^2+
\\
\Big|\Big(\frac{2L}{3}-m\Big)\alpha_x^{mn}-n\alpha_y^{mn}\Big|^2=a(\alpha_x^{mn})^2+2b\alpha_x^{mn}\alpha_y^{mn}+c(\alpha_y^{mn})^2\,,
\end{multline}
where
\begin{align}
  & a=4\Big(\frac{L}{3}-m\Big)^2+\frac{4L^2}{9} \\
  & b=4\Big(\frac{L}{3}-m\Big)\Big(\frac{L}{2}-n\Big) \\
  & c=4\Big(\frac{L}{2}-n\Big)^2+L^2\,.
\end{align}
\end{subequations}
Given that
\begin{displaymath}
  ac-b^2\geq \frac{26}{9}L^2 \,,
\end{displaymath}
we can now conclude that there exist $C>0$ such that 
\begin{displaymath}
  a(\alpha_x^{mn})^2+2b\alpha_x^{mn}\alpha_y^{mn}+c(\alpha_y^{mn})^2\geq CL^2\big[(\alpha_x^{mn})^2+(\alpha_y^{mn})^2\big]\,.
\end{displaymath}
Combining the above with \eqref{eq:69} and \eqref{eq:70} yields
\eqref{eq:67}.

{\em Step 2:} Prove that there exist $\lambda>0$ and $L_0>0$ such that
\begin{equation}
\label{eq:71}
 \sum_{m=1}^{2L/3}\sum_{n=-L+1}^L|b_{mn}-2\pi i{\bs
   V}_{mn} \cdot{\bs \delta}_{mn}  |^2 \leq Ce^{-\lambda L}\|{\bs
   \delta}_{mn}\|_2^2  \,,
\end{equation}
for all $L>L_0$.

Writing the Taylor expansion of $b_{mn}$, for $(m,n)\in I_L$, around
$\Tg_L$ gives
\begin{displaymath}
b_{mn} =   2\pi i{\bs V}_{mn} \cdot {\bs \delta}_{mn} -
\frac{3\pi^2}{L^2}\sum_{k=1}^{2L/3}\sum_{j=1}^Le^{i2\pi\tilde{z}_{kj}^{mn}}|\delta_{mn}^{kj}|^2 \,,
\end{displaymath}
where
\begin{displaymath}
  \tilde{z}_{kj}^{mn}= mx_{kj}^0+ny_{kj}^0+ \theta\big(m[x_{kj}-x_{kj}^0]+ n[y_{kj}-y_{kj}^0]\big)\,,
\end{displaymath}
and 
\begin{displaymath}
  \delta_{mn}^{kj}=[{\bs \delta}_{mn}]_{(k-1)L+j} 
\end{displaymath}
Consequently,
\begin{displaymath}
  |b_{nm}-2\pi i{\bs V}_{mn} \cdot {\bs \delta}_{mn} |\leq \frac{3\pi^2}{L^2}\|\delta_{mn}\|_2^2
\end{displaymath}
Hence,
\begin{displaymath}
 \sum_{m=1}^{2L/3}\sum_{n=-L+1}^L|b_{nm}-2\pi i{\bs V}_{mn} \cdot {\bs \delta}_{mn} |^2
 \leq \frac{9\pi^4}{L^4} \sum_{m=1}^{2L/3}\sum_{n=-L+1}^L\|\delta_{mn}\|_2^4\,.
\end{displaymath}
By \eqref{eq:60}we then obtain \eqref{eq:71}. 

{\em Step 3:} Prove \eqref{eq:66}.

As
\begin{displaymath}
  \sum_{m=1}^{2L/3}\sum_{n=-L+1}^L|b_{mn}|^2  \geq \frac{1}{2}\sum_{m=1}^{2L/3}\sum_{n=-L+1}^L|\langle{\bs V}_{mn},{\bs \delta}_{mn}\rangle|^2
  -
 \sum_{m=1}^{2L/3}\sum_{n=-L+1}^L|b_{mn}-2\pi i{\bs
   V}_{mn} \cdot{\bs \delta}_{mn}  |^2 \,,
\end{displaymath}
we obtain from \eqref{eq:67}, \eqref{eq:71} and \eqref{eq:60} that
\begin{displaymath}
  \sum_{m=1}^{2L/3}\sum_{n=-L+1}^L|b_{mn}|^2\geq C[L^2 \|{\bs \delta}_L\|_2^2-e^{-\lambda L}\|{\bs \delta}_L\|_2^2 ]\,.
\end{displaymath}
For sufficiently large $L$ we can easily conclude \eqref{eq:66}.
\end{proof}

We now establish that a minimizing set under \eqref{eq:15} is the
triangular lattice

\begin{proof}
  Let $b_n=b_n(\Sg_L)$. We first observe that
  \begin{multline}
\label{eq:72}
    \Jg(\Sg_L)-\Jg(\Tg_L)\geq \sum_{m=1}^{\frac{2L}{3}}\sum_{n=-L+1}^L
    a_{mn}|b_{mn}|^2\\ - \frac{1}{2}\sum_{(m,n)\in\Z^2\setminus\{(0,0)\}} a_{mL,nL}[1+(-1)^{m+n}][1-|b_{mL,nL}|^2] \,.
  \end{multline}
We first obtain a lower bound for the first term on the
right-hand-side using \eqref{eq:66}and  \eqref{eq:17}
\begin{equation}
\label{eq:73}
\sum_{m=1}^{\frac{2L}{3}}\sum_{n=-L+1}^L
    a_{mn}|b_{mn}|^2 \geq
    Ca_{2L/3,L}\sum_{m=1}^{2L/3}\sum_{n=-L+1}^L|b_{mn}|^2 \geq
    CL^2a_{2L/3,L}\|{\bs \delta}_L\|_2^2  \,.
\end{equation}

We next obtain an upper bound for the second term on the
right-hand-side of \eqref{eq:73}. In a similar manner to the one used
to obtain \eqref{eq:37} we obtain that
\begin{multline*}
\sum_{(m,n)\in\Z^2\setminus\{(0,0)\}} a_{mL,nL}[1+(-1)^{m+n}][1-|b_{mL,nL}|^2] \leq\\
CL^4\|{\bs \delta}_L\|_2^2\sum_{(m,n)\in\Z^2\setminus\{(0,0)\}} (m^2+n^2)a_{mL,nL}[1+(-1)^{m+n}] \,. 
\end{multline*}
By \eqref{eq:17} it holds that for sufficiently large $L$
\begin{equation}
\label{eq:74}
  \sum_{(m,n)\in\Z^2\setminus\{(0,0)\}} a_{mL,nL}[1+(-1)^{m+n}][1-|b_{mL,nL}|^2]
  \leq CL^4e^{-\lambda L}a_{2L/3,L}\|{\bs \delta}_L\|_2^2
\end{equation}
Substituting \eqref{eq:73} and \eqref{eq:74} into \eqref{eq:72} yields
that for sufficiently large $L$ 
\begin{displaymath}
  \Jg(\Sg_L)-\Jg(\Tg_L)\geq0 \,,
\end{displaymath}
where equality is achieved only when $\Sg_L=\Tg_L$.  The Theorem
is proved.
\end{proof}

\begin{remark}
  Let $G:\R^2\to\R$ denote the kernel of the periodic inverse of $-\Delta$
  on \linebreak $R=[0,1]\times[0,1]$. Then (see \cite{pese20} for instance)
  \begin{equation}
\label{eq:75}
    G({\bs x},0)= \int_0^\infty S_t(\bs x,0)\, dt \,,
  \end{equation}
where the periodic heat kernel is given by
\begin{displaymath}
  S_t(\bs x,0)= \sum_{\bs \xi\in\Z^2}\frac{1}{4\pi t}\exp\Big\{\frac{|x-\xi|^2}{t}\Big\}  -1\,.
\end{displaymath}
By the Poisson summation formula we have
\begin{displaymath}
 S_t(\bs x,0) = 
\sum_{\bs \omega\in\Z^2 \setminus \{0\}} e^{-4\pi^2 |\omega|^2 t } e^{2i\pi \bs \omega\cdot \bs x}.
\end{displaymath}
Thus, stretching one of the coordinates by $\sqrt{3}$, we obtain from
the results of this section that the triangular is the minimizer, with
respect to triplets, for all $t\gg1/L^2$. However, since the integration
with respect $t$ in \eqref{eq:75} starts from $t=0$ we cannot yet
conclude that the minimizer of $G$ with respect to triplets is the
triangular lattice. 
\end{remark}

{\bf Acknowledgments:} The author would like to thank Itai Shafrir for
some helpful discussions.
\bibliography{vortices1}
\end{document}